\newenvironment{proof}{\noindent {\bf Proof.}}{}
\newtheorem{theorem}{Theorem}[section]
\newtheorem{lemma}{Lemma}[section]
\newcommand{\cA}{{\cal A}}
\newcommand{\dpn}{Dispersion}
\newcommand{\ssl}{{\sc Send\_Signal}}
\newcommand{\lsl}{{\sc Learn\_Signal}}
\newcommand{\actv}{{\sc Active}}
\newcommand{\master}{{\sc Master}}
\newcommand{\follower}{{\sc Follower}}
\newcommand{\pb}{{\sc ProcessBit}}
\begin{document}
\bibliographystyle{plain}

\title{{\bf Collaborative Dispersion by Silent Robots}}

\author{
Barun Gorain \footnotemark[1]
\and
Partha Sarathi Mandal \footnotemark[2]
\and
Kaushik Mondal  \footnotemark[3]
\and
Supantha Pandit \footnotemark[4]
}

\date{ }
\maketitle
\def\thefootnote{\fnsymbol{footnote}}

\footnotetext[1]{
\noindent
Department of Electrical Engineering and Computer Science, Indian Institute of Technology Bhilai, India.
{\tt barun@iitbhilai.ac.in}}

\footnotetext[2]{
\noindent
Department of Mathematics, Indian Institute of Technology Guwahati, India
{\tt psm@iitg.ac.in}}

\footnotetext[3]{
\noindent
Department of Mathematics, Indian Institute of Technology Guwahati, India
{\tt kaushik.mondal@iitropar.ac.in}}

\footnotetext[4]{
\noindent
Dhirubhai Ambani Institute of Information and Communication Technology, Gandhinagar, Gujrat, India.
{\tt pantha.pandit@gmail.com}
}

\maketitle              % typeset the header of the contribution
\begin{abstract}

%The problem of  problem gets significant importance in recent years  because of its numerous applications in various fields, and the most promising one is recharging the self-driving cars in charging stations.
In the dispersion problem, a set of $k$ co-located mobile robots must relocate themselves in distinct nodes of an unknown network. The network is modeled as an anonymous graph $G=(V,E)$, where the nodes of the graph are not labeled. The edges incident to a node $v$ with degree $d$ are labeled with port numbers in the range $0,1, \cdots, d-1$ at $v$. The robots have unique ids in the range $[0,L]$, where $L \ge k$, and are initially placed at a source node $s$. Each robot knows only its own id but does not know the ids of the other robots or the values of $L,k$.
The task of dispersion was traditionally achieved with the assumption of two types of communication abilities: (a) when some robots are at the same node, they can communicate by exchanging messages between them (b) any two robots in the network can exchange messages between them.

In this paper, we ask whether this ability of communication  among  co-located robots is necessary to achieve dispersion. We show that even if the ability of communication is not available, the task of dispersion by a set of mobile robots can be achieved in a much weaker model where  a robot  at a node $v$ has the access of following very restricted  information at the beginning of any round: (1) am I alone at $v$? (2) the number of robots at $v$ increased or decreased compare to the previous round?

We propose a deterministic algorithm that achieves dispersion on any given graph $G=(V,E)$ in time $O\left( k\log L+k^2 \log \Delta\right)$, where $\Delta$ is the maximum degree of a node in $G$. Each robot uses $O(\log L+ \log \Delta)$ additional memory. We also prove that the task of dispersion cannot be achieved by a set of mobile robots with $o(\log L + \log \Delta)$ additional memory.
\newline
{ \noindent{\bf Keywords} Mobile robots, Anonymous graphs, deterministic algorithms, memory efficiency.}
\end{abstract}

\section{Introduction}
\subsection{Background}
% The \dpn~problem in a graph using mobile robots  became popular in very recent times. In this problem, a set of $k$ mobile robots, starting from one or multiple source nodes, must be placed in the nodes of the graph such a way that no two robots are placed on a single node.  This problem was first introduced by Augustine and Moses Jr. \cite{AugustineM18}. After that this problem gets attention from various researchers over varying model. This problem has several practical applications. The most prominent application is charging self-driving electric cars in charging stations \cite{AugustineM18}.
The \dpn~problem in a graph using mobile robots became popular in very recent times.  In this problem, a set of $k$ mobile robots, starting from one or multiple source nodes, must relocate themselves in the nodes of the graph so that no two robots are placed on a single node. This problem was first introduced by Augustine and Moses Jr. \cite{AugustineM18}. In the last few years, this problem got attention from various researchers and has been studied over various models.
This problem has several practical applications. The most prominent application is charging self-driving electric cars in charging stations \cite{AugustineM18}.
It is assumed that charging a car is a time-consuming and costly task than relocating the car to a nearby free charging station. So it is better to spread the cars such that each charging station gets one at any time instead of a long queue in a single station.
% The \dpn~problem is closely related to several problems on a graph network such as exploration \cite{BrassCGX11,BrassVX14,DereniowskiDKPU15}, scattering \cite{BarriereFBS11,ElorB11,ShibataMOKM16}, load balancing, etc. In the previous studies of dispersion, it is assumed that if two robots co-located at the same node, then they can communicate and exchange any amount of information. This enables the  robots to learn the number of co-located robots in the node, their ids, the previous histories etc.
The \dpn~problem is closely related to several problems on a graph network such as exploration  \cite{BrassCGX11,BrassVX14,DereniowskiDKPU15}, scattering \cite{BarriereFBS11,ElorB11,ShibataMOKM16}, load balancing, etc. In the previous studies of dispersion, it is assumed that if two robots are co-located at the same node, they can communicate and exchange any amount of information. It enables the robots to learn the number of co-located robots in the node, their ids, the previous histories, etc.

\subsection{Motivation and Problem Definition}
% Our work is motivated by the recent work on gathering  by Bouchard et al \cite{BouchardDP20}. In the problem of gathering \cite{Alpern2003,Pelc12}, a set of mobile robots, starting from different nodes of an unknown graph must meet at a single node and declare that they all met. In all the prior works related to gathering, the mobile robots are assumed to have the capability of communication: any two robot can communicate if they are located at the same node. Bouchard et al \cite{BouchardDP20} asked the following fundamental question: whether the capability of communication between co-located sensor necessary for gathering? They show that the gathering can be achieved without the capability of communication by a set of mobile robots.

% The task in the problem of dispersion is hypothetically opposite to gathering. Here, a set of co-located mobile robots must be relocated to different nodes of a graph. In all prior works in dispersion as well, the capability of communication is assumed. Therefore, it is natural to ask whether this capability of communication is necessary to solve dispersion.

Our work is motivated by the recent work on gathering by Bouchard et al. \cite{BouchardDP20}. In the problem of gathering \cite{Alpern2003,Pelc12}, a set of mobile robots, starting from different nodes of an unknown graph, must meet at a node and declare that they all met. In all the prior works related to gathering, the mobile robots are assumed to have the capability of communication: any two robots can communicate if they are co-located at a node. Bouchard et al. \cite{BouchardDP20} asked the following fundamental question: whether the capability of communication between co-located robots is necessary for gathering? They show that  gathering can be achieved without communication by a set of co-located mobile robots. Here, it is assumed that a robot at any node can see how many robots are co-located with it in any round.

The task in the problem of dispersion is the opposite of gathering. Here, a set of co-located mobile robots must be relocated to different nodes of the graph. Similar to the problem of gathering, in all prior works in dispersion, the capability of communication between co-located robots is assumed. Therefore, it is  natural to ask whether this communication capability is necessary to solve dispersion.

\subsection{ The Model}
% Let $G=(V,E)$ be a connected graph with $n$ nodes. The nodes of the graph are anonymous but the edges incident to a node $v$ of degree $d(v)$ are labelled arbitrarily by unique port numbers $0,1, \cdots,d(v)-1$. Thus, every edge in $E$ are associated with two port numbers, one each corresponding to its end vertices.  Let $s$ be a specified source node in $V$ and initially $k<n$ mobile robots are situated at $s$. Each mobile robot has an unique id  represented as a binary string in the range $[0,L]$, $L \ge k$. A mobile robot knows its own id,  but does not know the id of the other robots or the value of $L$. The robots move in synchronous round and at most one edge can be traversed by an robot in every round. Each round is divided into two different stages. In the first stage, each robot at a node $v$ does any amount of local computations. In the second stage, based on the computation done in the previous stage, either move along one of the edges incident to $v$ or stay at $v$.

Let $G=(V,E)$ be a connected graph with $n$ nodes. The nodes of the graph are anonymous but the edges incident to a node $v$ of degree $d$ are labeled arbitrarily by unique port numbers $0,1, \cdots,d-1$. Thus, every edge in $E$ is associated with two independent port numbers, one corresponding to each of its end nodes. Let $s$ be a specified source node in $V$ and initially $k\le n$ mobile robots are placed at $s$. Each mobile robot has a unique integer id represented as a binary string in the range $[0,L]$, $k-1 \leq L$.

A mobile robot knows its own id,  but does not know the ids of the other robots or the values of $L$ and $k$. The robots move in synchronous rounds, and at most one edge can be traversed by a robot in every round. Each round is divided into two different stages. In the first stage, each robot at a node $v$ does any amount of local computations. In the second stage, a robot moves along one of the edges incidents to $v$ or stays at $v$.

The robots are silent: there is no means of communication between any two robots in the graph. A robot $M$ at a node $v$ has access to the following two local information at the beginning of a round: (1) am I alone at $v$? (2) the number of co-located robots increased or decreased compared to the previous round.

The robots use three binary variables named $alone$, $increase$ and $decrease$ in order to store the above mentioned local information. In any round $t$, If there is only one robot at $v$, then $alone=true$, else $alone=false$. If a robot $M$ decides to stay at a node $v$ in the $t$-th round, and if the number of robots that left $v$ in the $t$-th round is more than the number of robots that entered $v$, then $decrease=true$ for $M$ at the beginning of the $(t+1)$-th round. If a robot $M$ decides to stay at a node $v$ in the $t$-th round, and if the number of robots that left $v$ in the $t$-th round is less than the number of robots that entered $v$, then $increase=true$ for $M$ at the beginning of the $(t+1)$-th round. Otherwise, both $increase$ and $decrease$ are $false$ at the beginning of the $(t+1)$-th round.

Whenever a robot $M$ entered a node $v$ in some round $t$, using a port from another node, it learns the incoming port through which it reaches the node $v$, the degree of the node $v$.  If two robots decide to move along the same edge from the same or different end nodes in the same round, none of the robots can detect the other robot's movement.

\subsection{Our Contribution}
So far, all the works on \dpn~problem require two types of communication abilities between the robots: (1) when some robots are at the same node, they can communicate by exchanging messages between them (2) any two robots in the network can exchange messages between them.  %In this paper, we show that none of the above types of communication ability is not required to achieve dispersion.
In this paper, we show that none of the above communication abilities is required to achieve dispersion.  We propose a deterministic algorithm in our described model, that runs in time $O(k \log L+ k^2 \log \Delta)$, where $k$ and $\Delta$ are the number of robots and the maximum degree of the underlying graph, respectively. The additional memory used by the robots is $O(\log L + \log \Delta)$. We further prove that to achieve dispersion in our model, $\Omega(\log L + \log \Delta)$ additional memory is necessary.

It can be noted here that with the sufficient amount of memory available to each robot, dispersion can be achieved by the following trivial algorithm: the robot with id $i$ starts moving according to depth-first search in round $i$ where the edges are visited in the increasing order of their port numbers. The robot settles down at the first node which is not previously occupied by some other robot.

Hence, our main contribution in this paper is to design an efficient dispersion algorithm by a set of silent robots with asymptotically optimal memory.

%{write something about the achievement}

\subsection{Related Works} The \dpn~problem was first introduced by Augustine and Moses Jr. \cite{AugustineM18}. In this paper, the authors considered this problem when the number of robots (i.e., $k$) is equal to the number of vertices (i.e., $n$) and all the robots are initially co-located. Along with arbitrary graphs, they also study various special classes of graphs such as paths, rings, and trees.  They proved that, for any graph $G$ of diameter $D$, any deterministic algorithm must take $\Omega(\log n)$ bits of memory by each robot and $\Omega(D)$ number of rounds. For arbitrary graphs with $m$ edges, they provided an algorithm that requires $O(m)$ rounds where each robot requires $O(n\log n)$ bits of memory. For paths, rings, and trees, they provided algorithms such that each robot requires $O(\log n)$ bits of memory and takes $O(n)$ rounds. Further, their algorithm takes $O(D^2)$ rounds for  rooted trees, and each robot requires $O(\Delta + \log n)$ bits of memory. All their algorithms work under the local communication model, i.e., co-located robots can communicate among themselves.

Kshemkalyani and Ali \cite{KshemkalyaniF19} proposed five different \dpn~algorithms for general graphs starting from arbitrary initial configurations. Their first three algorithms require $O(m)$ time   and each robot requires $O(k \log \Delta)$ bits of memory, where $m$ is the number of edges and $\Delta$ is the degree of the graph. These three algorithms differ on the system model and what, where, and how the used data structures are maintained. Their fourth and fifth algorithms work in the asynchronous model. Their fourth algorithm uses $O(D \log \Delta)$ bits of memory at each robot and runs in $O(\Delta ^D)$ rounds, where $D$ is the graph diameter. Their fifth algorithm uses $O(\max(\log k, \log \Delta ))$ bits memory at each robot and uses $O((m-n)k)$ rounds. All their algorithms work under the local communication model.

In \cite{KshemkalyaniMS19}, Kshemkalyani et al. provided a novel deterministic algorithm in arbitrary graphs in a synchronous model that requires $O(\min(m, k \Delta) \log k)$ rounds and $O(\log n)$ bits of memory  by each robot. However, they assumed that the robots know the maximum degree and number of edges.  Shintaku et al. \cite{ShintakuSKM20} studied the \dpn~problem of \cite{KshemkalyaniMS19} without the knowledge of maximum degree and number of edges and provided an algorithm that uses the same number of rounds and $\log(\Delta + k)$ bits of memory per robot which improves upon the memory requirement of \cite{KshemkalyaniMS19}. Recently Kshemkalyani et al.\cite{Kshemkalyaniarxiv} came up with an improved algorithm where it requires $O(\min(m, k \Delta))$ rounds but the memory requirement remains the same as in  \cite{ShintakuSKM20}. All the algorithms in \cite{KshemkalyaniMS19, ShintakuSKM20, KshemkalyaniMS19} works under the local communication model.

The \dpn~problem was studied on dynamic rings by Agarwalla et al. \cite{AgarwallaAMKS18}. In \cite{MollaMM20}, Molla et al. introduced fault-tolerant in \dpn~problem in a ring in the presence of the Byzantine robots. The results are further extended by the authors in \cite{Anisur21} where dispersion on general graphs in presence of Byzantine robots are considered. In all these algorithms, local communication model is considered. Molla et al. \cite{MollaM19} used randomness in \dpn~problem. They gave an algorithm where each robot uses $O(\log \Delta)$ bits of memory. They also provided a matching lower bound of $\Omega(\log \Delta)$ bits for any randomized algorithm to solve the \dpn~problem. They extended the problem to a general $k$-dispersion problem where $k > n$ robots need to disperse over $n$ nodes such that at most $\frac{k}{n}$ robots are at each node in the final configuration. Very recently, Das et al. \cite{Das2020} studied dispersion on anonymous robots and provided a randomized dispersion algorithm where each robot uses $O(\log \Delta)$ bits of memory. In both the works, local communication model is considered. There are works \cite{KshemkalyaniMS20walcom, Kshemkalyanijpdc22} in the global communication model as well where robots can communicate even if they are located in different nodes. Results in these paper includes dispersion on grids as well as general graphs.

Note that, in all of the results mentioned above, robots need to communicate between them, either locally or globally.

%The \dpn~problem was first time studied on grid graph by Kshemkalyani et al. \cite{KshemkalyaniMS20walcom}. They provided two deterministic algorithms, (1) for the local communication model and (2) for the global communication model. In local communication, a robot can only communicate with other robots present at the same node.
%In global communication, a robot can communicate with any other robot present at any node, though knowledge of the graph structure is unknown to the robot. For local communication model, they proposed an algorithm that requires $O(\min(k, \sqrt{n})$ rounds and each robot uses  $O(\log k)$ bits of memory. For global communication model, their algorithm requires $O(\sqrt{k}))$ rounds and each robot uses  $O(\log k)$ bits of memory. An extension of this work on general graph is considered by Kshemkalyani et al. \cite{Kshemkalyanijpdc22} in the same global communication model. They proposed two algorithms for arbitrary graphs using DFS traversal. One algorithm requires $O(k\Delta)$ rounds and uses $O(\log (k+\Delta))$ bits memory at each robot. The other one requires $O(\min(m, k\Delta))$ rounds and uses $O(\Delta +\log k)$ bits memory at each robot. Another algorithm in this paper is based on BFS traversal which requires $O((D+k)\Delta(D+\Delta))$ rounds and uses $O(\log D+\Delta \log k)$ bits memory at each robot.

\section{Dispersion on Graphs}
In this section, we propose an algorithm that achieves dispersion in any anonymous graph in time $O(k \log L+k^2\log \Delta)$ and with $O(\log L+\log \Delta)$ {additional} memory. Before we describe our algorithm, we give an overview how previous results on dispersion work where co-located robots with limited memory can exchange arbitrary amounts of messages between them.
The proposed algorithms in the previous works on dispersion rely on exploring nodes of the graph using depth-first search (DFS).
At the beginning, all the robots are at a node $s$ and the smallest id robot settles at $s$ and the other robots move to an adjacent node according to DFS. The robots learn about the smallest id by exchanging their ids among co-located robots.
In any round, the robot with the smallest id among the co-located robots settles at the current node and other robots move to an adjacent empty neighbor. Here, the self-placement of a robot at an `empty' node represents `coloring' of already visited nodes in DFS traversal. Therefore, even if the graph is anonymous, DFS traversal can still be executed by the mobile robots. The difficulty here is that due to limited memory, the robots may not store the entire path it follows while traversing nodes before it settles down at an empty node. To be specific, suppose that, all the unsettle robots reach a node $v$ (which is already occupied by some other robot) during DFS. If the robots observe that each of the neighbor of $v$ is already visited (the robot can learn this by visiting the neighbors of $v$ and observing that these neighbors are already occupied by some other mobile robots), then as per DFS, all the unsettle robots  must backtrack to the node $u$ from which it visited $v$ and then search for another empty neighbor of $u$. If there is no empty neighbor of $u$, backtrack again and continue this way until  an empty node is found. Without sufficient memory, the robots cannot do this process of backtracking by themselves. Here, the capability of communications between co-located robots again comes for rescue. Each robot stores the incoming port through which it enters the empty node where it  settled down. The information of this port serves as the pointer to backtrack from a particular node. When the set of unsettled robots unable to find any empty neighbor of a node $v$, they  have to backtrack. The robot which settled at $v$ provides the information of the port for backtracking to these unsettled robots.
 Therefore, the above DFS like dispersion strategy can be executed with $O(\log \Delta)$ memory at each robot, where $\Delta$ is the maximum degree of a node in the graph.

The difficulty arises when the co-located robots do not have the capability of communication. %To be specific, in absence of communication, following major issues may arise.
Specifically, the following major issues may arise in the absence of communication.
\begin{itemize}
\item Each mobile robot only knows its own label but unable to know the labels of other robots without direct communication. Therefore, a strategy like a robot with the `minimum' or `maximum' label settled down will not work.
\item With limited memory and lack of communication, a robot may not learn sufficiently long path information which is needed for backtracking.
%With limited memory and and lack of communication, the robots may not able to learn sufficiently long paths which are needed to do backtracking.
\end{itemize}

We next describe how our algorithm overcomes the above difficulties and enables the robots to execute the dispersion successfully. Our algorithm runs in several iterations. We call a node $v$ {\it full} in an iteration $j$, if $v$ is occupied by a robot at the end of the iteration $j$. Otherwise, it is called {\it empty}. Also, for any node $v$, the node adjacent to $v$ and connected through the port $i$ from $v$ is denoted by $v(i)$.

In each iteration of our algorithm, except the last iteration, an empty node becomes full and no full node becomes empty. Therefore, if $k$ robots are present at the start node initially, the task of dispersion is completed within $k-1$ iterations. An additional iteration is required to identify the fact that the dispersion is completed.

Each iteration of the algorithm has two phases. In Phase 1, a leader election algorithm is executed and a robot $M$ is elected as the leader from a set of robots $R$. In Phase 2, an empty node is occupied by a mobile robot that is either the robot elected recently or a robot elected previously.
%In phase 2, an empty node is occupied by a mobile robot which is either the robot which is elected recently, or a robot which was elected previously.

A detailed description of the algorithm with the high-level idea is described below. During the description of each major step of the algorithm, we explicitly mention the purpose of the respective steps, the difficulty of implementing the steps with existing techniques, and how to overcome such difficulties.

\subsection{The Algorithm}
\subsubsection{High level idea and preliminaries} The proposed algorithm executes in several iterations. Each iteration of the algorithm consists of two phases: Phase 1 and Phase 2. In Phase 1, a leader among the robots situated at $s$ is elected. In Phase 2, one empty node is occupied by a robot. Phase 2 requires several communications between robots. Since there is no means of direct communication, we adopt the idea proposed in \cite{BouchardDP20} which enables the robots to communicate between them by utilizing the robot's movement as the tool of communication. We describe later how this process of communication is executed while describing Phase 2 of our algorithm.

Following tasks are collectively executed by the robots in each iteration.
%There are two phases which are executed by the mobile robots during each iteration of our algorithm.
\begin{itemize}
    \item  Execution of  Phase 1.
    \item Execution of Phase 2. This phase includes five major steps.
    \begin{enumerate}
         \item Informing all robots (those who will participate in Phase 2) that Phase 1 ended.
        \item Finding an empty node.
        \item Propagate information whether an empty node was found or not to the robots `participating' in Phase 2.
        %\item Propagation of information whether an empty node found or not to the robots `participating' in Phase 2.
        \item Movement of the robots for occupying the empty node.
        \item Termination detection.
    \end{enumerate}
\end{itemize}

Each of the above tasks involves the movement of the robots. A robot may decide to move in a particular round to achieve one of the following goals.
\begin{itemize}
    \item To elect the leader.
    \item To transmit some information.
    \item To search or occupy an empty node.
\end{itemize}

The movement of a robot for a specific purpose may create confusion for other robots, who are affected by this movement. This is because a robot may decide to move for finding an empty node but some other robots may learn this movement as the initiation of some message transmission. Therefore special care must be taken to avoid such ambiguity. Our algorithm overcomes such ambiguity by allotting a unique `slot' to each of the five above mentioned steps for Phase 2 and one slot for Phase 1.
 %There is also a slot for Phase 1.
 To be specific, each  round in a consecutive block of six rounds is dedicated to exactly one of the above six steps required to execute Phase 1 and Phase 2.
 That is, a robot moving in round $6i+j$ means different to the algorithm than a robot moving in round $6i+j'$ for $j \ne j'$ and $0 \le j,j' \le 5$. For $0 \le j \le 5$, we call a round {\it $j$-dedicated}, if the round number is of the form $6m+j$ for some positive integer $m$ (ref. Fig. \ref{fig:round}).

 \begin{figure}[ht!]
    \centering
\includegraphics[width=.5\textwidth]{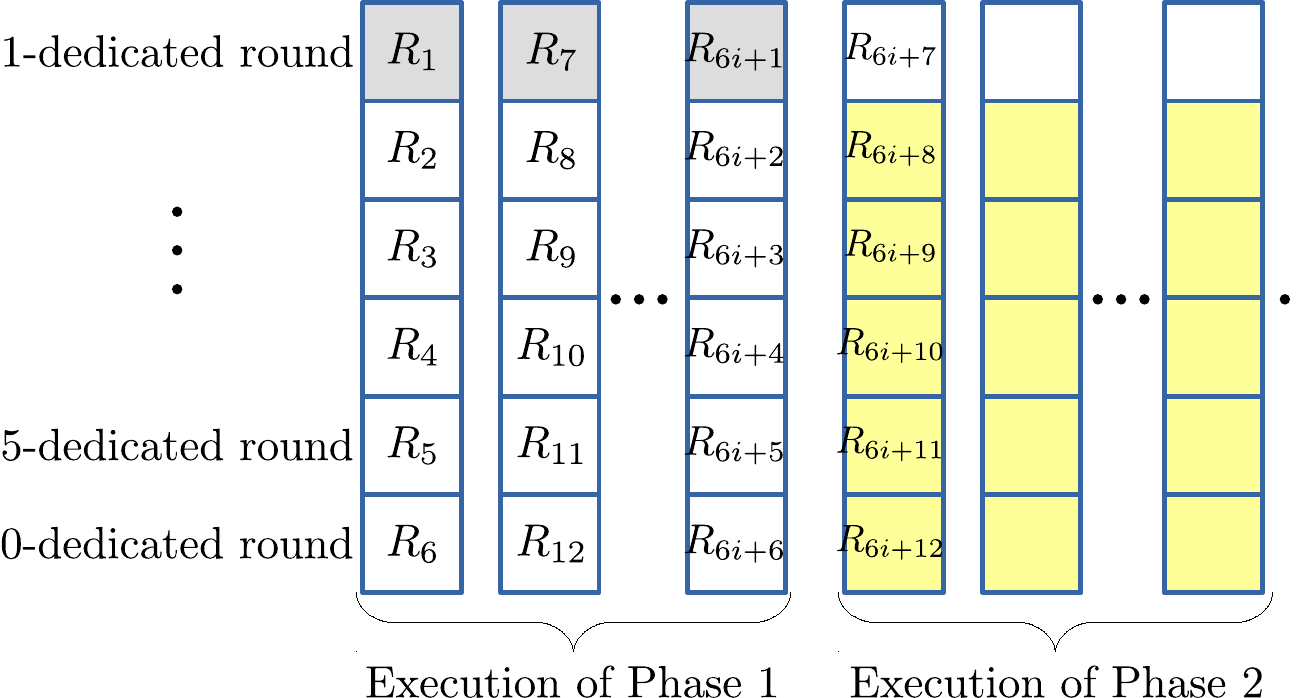}
    \caption{Blocks of consecutive 6 rounds are shown in the figure.
     A round in a block is dedicated to precisely one of the six steps of Phase 1 and Phase 2. The gray colored cells represents 1-dedicated rounds that are used to execute Phase 1  and  yellow colored cells represent eligible dedicated rounds of Phase 2.}
    \label{fig:round}
\end{figure}

During the execution of the algorithm, each robot maintains a variable $status$. At any point of time the $status$  of a robot can be one of following.
\begin{itemize}
\item {$active$:} These robots participates in Phase 1 and one of the $active$ robots is elected as   `leader' in Phase 1 of any iteration.
\item {$master$ and $follower$:} The  robot, elected as the leader in Phase 1 of an iteration, changes its $status$  to $master$ in the very first iteration of the algorithm. In the subsequent iterations, the leader changes its $status$  to $follower$.

\item{$idle$:} A robot with this $status$  does not take part in the algorithm anymore. During the execution of our algorithm, eventually, each robot becomes $idle$.

%We say the algorithms terminates if all robots becomes $idle$.

%Three more statuses namely $mover_f$, $mover_m$ and $mover_s$ are used. The purpose of these statuses will be described in appropriate places.
%\item \colb{$mover_f$:} A $follower$ needs to change its status to $mover_f$ if it is settled in a particular node of the graph at the end of phase 2 of each iteration.
%\item \colb{$mover_m$:}A $master$ needs to change its status to $mover_m$ if it is settled in a particular node of the graph at the end of phase 2 of each iteration.
%\item \colb{$mover_s$:}A $follower$ that is elected in the current iteration, needs to change its status to $mover_s$ at the end of phase 2 of each iteration.
\end{itemize}
Initially, all robots are $active$.  The algorithm terminates when $status$  of each robot becomes $idle$.
\vspace{0.3cm}

With the above details, we are ready to describe the details of the dispersion algorithm. We start by describing Phase 1 of the algorithm.

\subsubsection{ Phase 1 (Electing Leader)}
Only the $active$ robots start executing this phase. The robots execute the steps of this phase only in 1-dedicated rounds.  The $active$ robots at $s$ start this phase in round 1  if this is the first iteration of the algorithm. Otherwise, if $decrease=true$ in some 5-dedicated round at $s$ (this event signifies the fact that the activities in the last iteration are ended), then all the $active$ nodes start executing Phase 1 in the next 1-dedicated round.
%\psm{above paragraph is not clear to me}

\begin{algorithm}[ht!]
  \SetAlgoLined
    \If{$status=active$}
        {\textsc{Phase\_1(M)}}
    {\textsc{Phase\_2(M)}}\\
    \If{$status \ne idle$}
    {\textsc{Dispersion(M)}}
  \caption{\textsc{Dispersion(M)}}
  \label{alg0:dispersion}
\end{algorithm}

\begin{algorithm}[ht!]
  \SetAlgoLined
    \If{($round=1$) or ($decrease=true$ in a 5-dedicated round) }{
    {Wait for the first available 1-dedicated round.}\\
    {$j=1$, $election=false$.}\\
    \While{$election=false$}{
        {\pb $(M,j)$}\\
        {j=j+1}
    }
    }
  \caption{\textsc{Phase\_1(M)}}
  \label{alg:phase1}
\end{algorithm}

\begin{algorithm}[ht!]
  \SetAlgoLined
    \If{$status=master$}{
    {\master $(M)$.}\\
    }
    \Else{
        \If{$status=follower$}
            {\follower $(M)$.}
        \Else{
        \If{$status=active$}
            {\actv $(M)$.}
        }
    }
  \caption{\textsc{Phase\_2(M)}}
  \label{alg:passive}
\end{algorithm}

\begin{algorithm}[ht!]
  \SetAlgoLined
    \tcp{\bf 1st 1-dedicated round:}
    \If{$engage=true$}{
        \If{$alone=true$}{
            {$candidate=true$}}
    \Else{
        \If{ the $j$-th bit of $l(M)$ is 1}
            {$move=1$. Move through port 0. Let $q$ be the incoming port at $s(0)$.}
    }
    }
    %{Wait for next 6 rounds}\\
    \tcp{\bf 2nd 1-dedicated round:}
        \If{$engage=true$}{
        \If{($move=0$) and ($decrease=true$ in the last 1-dedicated round)}
            {$move=2$. Move through port 0.}}
            %{Wait for next 6 rounds}\\
    \tcp{\bf 3rd 1-dedicated round:}
           \If{$engage=true$}{
            \If{$move=2$}{move through port $q$}
            \If{($move=1$) and ($increase=true$ in the 2nd 1-dedicated round)}
                {Move through port $q$.}
            \Else{
                \If{$move=1$ and $increase=false$}
                    {$move=0$. Move through port $q$.}
           }
        }
        %{{Wait for next 6 rounds}}\\
        \tcp{\bf 4th 1-dedicated round:}
                \If{$engage=true$}{
                \If{$move=1$}
                    {Set $engage=false$. Move through port $\delta-1$. Let the incoming port is $q'$.}}
        %{Wait for next 6 rounds}
     \tcp{\bf 5th 1-dedicated round:}
     \If{$candidate=true$}
        {$election=true$. Move through port $\delta-1$. Let the incoming port be $q'$.}

        %{Wait for next 6 rounds}\\
    \tcp{\bf 6th 1-dedicated round:}
        \If{$candidate=true$}{
        \If{ the current iteration is the 1st iteration}
            { Move through port $q'$. $status=master$, $recent=true$.}
        \Else{$status=follower$, $recent=true$. Move through port $q'$.}
        }
        \Else{
            \If{($engage=false$ and $increase=true$ in the  5th 1-dedicated round) }
               % {$status=follower$\ka{should be active instead of follower}}.
               {$engage=true$. $election=true$, $move=0$. Move through port $q'$.}
        }
        %{Wait for next 6 rounds}
  \caption{\pb $(M,j)$}
  \label{algo:processbit}
\end{algorithm}

Without the ability of communication by message passing, the labels of the robots are used to elect a leader among co-located robots at $s$. On a high level, a robot `leave' $s$ if a bit of its label is 1, else stay.
 Since the labels of two robots are different, there must be at least a position in their labels where the bits are different. Therefore, if, an robot moves when a bit of its label is 1 and does not move if the bit is 0 or all the bits of its labels are already processed, then the two robots must occupy different nodes within $z$ rounds where $z$ is the length of the label of the node with maximum id.  Careful implementation of this process ensures that after $z$ rounds, exactly one robot stays at $s$ and elects itself as the leader.
 However, this raises a difficulty. The difficulty is that the labels of the robots are not necessarily of  the same length. For example, suppose that the label of one robot is `10' and the other is `100'. If the robots process the bits of their label from left to right, then in this case, there is no way one can identify the position of the labels where the bits differ. This difficulty can be overcome by processing the bits from right to left instead of left to right as no label can end with a zero if read from right to left.

Let $l(M)$ be the reverse binary string corresponding to the label of the robot $M$.  Also, let $\delta$ be the degree of $s$. Without loss of generality, we assume that the degree of $s$ is at least 2. Otherwise, each robot moves to the adjacent node of $s$ and starts the algorithm from the node.

Each robot $M$ executes a subroutine called \pb $(M,j)$ (ref. Algorithm \ref{algo:processbit})  for the $j$-th bit of its reverse label $l(M)$, for all $j=1,2,3 ,\cdots$ until the leader is elected. If $|l(M)| <j$ for a robot $M$, then the $j$-th bit is treated as zero while executing \pb $(M,j)$.

%{Describe high  level of this subroutine}
Each call of subroutine \pb~is executed for six consecutive 1-dedicated rounds. On a high level, after executing subroutine \pb $(M,j)$, two robots whose $j$-th bits are not the same, gets separated. To be specific, after executing subroutine \pb $(M,j)$, one of the following events happens.
\begin{itemize}
    \item If all robots at $s$ have the same $j$-th bit, then after executing subroutine \pb $(M,j)$ all of them stay at $s$ and participate in the next call of the subroutine for the $(j+1)$-th bit.
    \item Otherwise, the robots whose $j$-th bit are 1 move to $s(\delta-1)$ and remain there until a robot is elected as the leader. Other robots with $j$-th bit 0, stay at $s$ and participate in the next call of this subroutine for the $(j+1)$-th bit.
\end{itemize}

Executing the above steps for $j=1,2, \ldots$, eventually, exactly one robot remains at $s$. This robot changes its $status$  to $master$ in the first iteration and  $follower$ in subsequent iterations.
%The other robots, after learning that the leader is elected, comes back to $s$ and with this, phase 1 of the current iteration ends. The details of the subroutine \pb~is described below.
After learning that the leader is elected, the other robots return to $s$, and Phase 1 of the current iteration ends with this. The details of the subroutine  \pb~are described below.

\noindent {\bf Description of the subroutine \pb}: The robots at $s$ use the set of variables $engage$, $move$, $candidate$, and $election$. At the beginning of Phase 1 of any iteration, for all the robots at $s$, $engage=true$, $move=0$, $candidate=false$ and $election=false$. At the beginning of each call of the subroutine \pb, all the robots at $s$ have $engage=true$ and these robots only participate in the first four 1-dedicated rounds. In the first 1-dedicated round, if the number of robots at $s$ is more than one, then the robots with the $j$-th bit of its reverse label 1 move through port 0 by setting the variable $move=1$. This activity in  round 1 resulted in a possible `split' in the set of robots at $s$. The robots with $j$-th bit 0 stayed at $s$ and the other robots move to $s(0)$. Hence, the robots who stayed at $s$ observe $decrease=true$ in the next round learns about this split. However, the robots that moved to $s(0)$ do not have any idea about this split as it may happen that all the robots moved to $s(0)$.

If some of the robots at $s$ have $j$-th bit 0 and others have $j$-th bit 1, a split happens in the 1st 1-dedicated round. The 2nd 1-dedicated round is for the robots at $s(0)$ to learn about this split. For this purpose, the robots at $s$, move to $s(0)$ in the 2nd 1-dedicated round after setting $move=2$. The robots at $s(0)$ (came to $s(0)$ in the first 1-dedicated round) observe $increase=true$ and hence learn the fact that some robots from $s$ visited $s(0)$ and therefore a split happened in the 1st 1-dedicated round.
The 3rd 1-dedicated round is for all the robots currently at $s(0)$ to come back to $s$. In the 4th 1-dedicated round, the robots whose $j$-th bit are 1, move to $s(\delta-1)$ and set $engage=false$.

If all robots at $s$ have the $j$-th bit 0, then in the 1st 1-dedicated round no robots move from $s$ and each of them has $move=0$.  Since these robots observe $decrease=false$, they learned that no split happened and did not move in the 2nd 1-dedicated round, and hence no robot participates in the 3rd or 4th 1-dedicated round. Hence at the end of the 4th 1 dedicated round, all the robots at $s$ have $move=0$, $engage=true$.

If all the robots at $s$ have the $j$-th bit 1, then in the 1st 1-dedicated round all robots move from $s$ to $s(0)$ and each of them has $move=1$. In the 2nd 1-dedicated round, no robots visit $s(0)$ (as there is no robot has left $s$). Hence the robots at $s(0)$ see $increase=false$ after the 2nd 1-dedicated round and learn that no split happened at $s$. In the 3rd 1-dedicated round these robots move back to $s$ and set $move=0$. Hence no robot participates in the 4th 1-dedicated round.

The 5th and 6th 1-dedicated rounds are participated by the robots only if there was only one robot at $s$ in the 1st 1-dedicated round. In this case, this robot had set $candidate=true$, and the subroutine identifies this robot as the leader. At this point, all the other robots must have set $engage=false$ and are in $s(\delta-1)$. To `inform' the robots at $s(\delta-1)$ that the leader is elected, the robot at $s$ (with $candidate=true$ ) moves to $s(\delta-1)$. Hence, in the 6th 1-dedicated round, robots at $s(\delta-1)$, after observing $increase=true$ in the 5th 1-dedicated round, learn that the leader is elected, and move back to $s$ after setting $election=true$, $engage=true$ and $move=0$. The robot with $candidate=true$ also returns to $s$ in the 6th 1-dedicated round and changes its $status$  to $master$ if the current iteration is the first iteration of the algorithm, else changes its $status$  to $follower$.

\subsubsection{ Phase 2 (Occupying an empty node)} All the robots except those who became $idle$ participate in this phase. In this phase, one empty node is occupied by a robot. On a high level, let $v_1, v_2, \cdots, v_{p-1}$ be the nodes that became full in consecutive iterations and $v_1$ is a neighbor of $s$. Let $r_1, r_2, \cdots, r_{p-1}$ be the robots that are in $v_1, v_2, \cdots, v_{p-1}$, respectively, before Phase 2 of the current iteration starts. Then the leader elected in Phase 1 of the current iteration moves to $v_1$, $r_1$ moves to $v_2$, $r_2$ moves to $v_3$ and so on and, finally $r_{p-1}$ moves to an empty neighbor of $v_{p-1}$.

%let $v_j$ be the node which became full in the last iteration and let $w$ be the empty neighbor of $v_j$ that is connected to $v_j$ through the port with minimum value. In this phase, $w$ is occupied by the robot

During the execution of our algorithm, there is a unique $master$ robot, and the other robots are either $follower$, or $active$, or $idle$.

\begin{algorithm}[ht!]
  \SetAlgoLined
    \If{$recent=true$}
        {
        {$prt=child+1$. Go to step \ref{master:step1}}
        }
    \Else{
    {Wait until $increase=true$ in a 0-dedicated round or in a 2-dedicated round}\\
    \If{$increase=true$ for a 0-dedicated round}
    {
    {$status$ =$idle$}
    }
    \Else{
    %{Wait until $increase=true$ in a round $5j+2$ for some $j$}\\
    {$prt=0$}\\
    {  $Found=false$}\label{master:step1}\\
    \While{$Found=false$}{
    { Move through port $prt$ in the next available 3-dedicated round. Let the incoming port is $q$}\\
    \If{$alone=true$}
    {
    {$Found=true$}
    }
    \Else{
    {Move through port $q$ in the next round. $prt=prt+1$}\\
    %{Wait for next three rounds.}
    }
    }
    \If{$Found=false$}{
    {
    $\alpha=1111$.}\\
    {\ssl $(\alpha,parent)$}\\
    {$status=idle$}
    }
    \Else{
    {$\alpha=1011 \cdot B_{prt}$, where $B_{prt}$ is the transformed binary encoding of the binary representation of the integer $prt$.}\\
    \If{$parent\ne NULL$}
    {\ssl $(\alpha,parent)$}
    {  $parent=q$, $recent=false$. Move through port $prt$ in the next available 5-dedicated round.}
    }
    }
    }
  \caption{\master ($M$)}
  \label{algo:master}
\end{algorithm}

\begin{algorithm}[ht!]
  \SetAlgoLined
   \If{$forward=0$}{
        \If{$alone=true$}{
            {Move through port port $child$ in the next available 0-dedicated round. Let the incoming port is $q$ }\\
            {Move through port $q$. $status=idle$.}\
    }
   \Else{

     {Move through port $child$ in the next 2-dedicated round. Let the incoming port be $q$. }\\
     {Move through port $q$ in the next round.}\\

   }
   }
    \Else{
        {Wait until $increase=true$ in a 0-dedicated round or in a 2-dedicated round}\\
    \If{$increase=true$ in a 0-dedicated round}
    {{Move through port $child$ in the next 0-dedicated round. Let the incoming port be $q$}\\
    {Move through port $q$ in the next 0-dedicated round. $status$ =$idle$}
    }
    \Else{
    {Move through port $child$ in the next 2-dedicated round. Let the incoming port be $q$. }\\
    %{Wait for next five rounds}\\
    {Move through port $q$ in the next round}\\
    %{%Wait until $increase=true$ in a round $5j+1$ for some $j$}\\
    %{Wait for next 5 rounds}\\
    %{Move through port $child$. Let the incoming port be $a$.}\\
    %{Move through port $a$}\\
    }
    }
    [$\gamma,p$]=\lsl $(M)$

    \If{$\gamma=1111$}{
        $status= master$. Call \master ($M$).}
    \Else{
    {\ssl $(1110 \cdot B_{child} ,parent)$}\\
    { Move through the port $p$ in the next available 5-dedicated round. Let the incoming port be $q$.}\\
    {$child=p$,$parent=q$. $forward=1$.}
    }
  \caption{\textsc{Follower($M$)}}
  \label{algo:follower}
\end{algorithm}
 \begin{algorithm}[ht!]
  \SetAlgoLined
    {[$\gamma,p$]=\lsl $(M)$}\\
    {$child=p$.}\\
    \If{$\gamma=1111$}{
        \While{$decrease=false$ in a 5-dedicated round}{
            \If{$decrease=true$ in a 3-dedicated round}
                {$child=child+1$}
        }
    }
\caption{\textsc{Active(M)}}
  \label{alg:passive}
\end{algorithm}

\begin{algorithm}[ht!]
  \SetAlgoLined
   {Wait until $increase=true$ in a 4-dedicated round}\\
    {$signal=0, b_1=0,b_2=0$}\\
    {$\alpha'=\epsilon$}\\
    \While{$signal \ne 1$}{
    \If{$increase=true$ in the last 4-dedicated round}
        {$b_1=1$}
    {Wait for the next 6 rounds}\\
    \If {$increase=true$ in the last 4-dedicated round}
        {$b_2=1$}
    \If{$b_1=0$ and $b_2=0$}
        {$signal$ =1}
        \Else{
        {$\alpha'=\alpha'\cdot b_1b_2$}\\
        {$b_1=b_2=0$}}
        {Wait for the next 6 rounds}
    }
    {Let $\alpha'= \gamma \cdot\beta'$, where $\gamma$ is the first two bits of $\alpha'$. Let $\beta$ be the string obtained from $\beta'$, respectively, by replacing every 11 by a 1 and every 10 by a 0 from left to right. Let $p$  be the integers whose binary representation is $\beta$. if $\beta=\epsilon$, then set $p=-1$}\\
    {Return $[\gamma,p]$}
  \caption{\lsl $(M)$}
  \label{alg0:learnsignal}
\end{algorithm}

\begin{algorithm}[ht!]
  \SetAlgoLined
   %{Wait until the next available 4-dedicated round}\\

   {$z=|\alpha|$, $j=1$}\\
   \For{$j=1$ to $z$}
    {
        \If{$j$-th bit of $\alpha$ is 1}{
           { Move through port $p$ in the next available 4-dedicated round. Let the incoming port is $q$.}\\
           {Move through port $q$ in the next round.}\\
           %{Wait for next four rounds.}
   }
   }
  \caption{\ssl $(\alpha,p)$}
  \label{alg:sendsignal}
\end{algorithm}

We define $parent$ and $child$ of the robots with $status$  $master$ and $follower$ in an iteration. Let $master$ (resp.~$follower$) be at node $v$ at the start of Phase 2 of some iteration. The parent of a $master$ robot $r$ (resp.~$follower$ robot) is the node $u$ from where the $master$ (resp.~$follower$) is entered at $v$ for the first time. If the $master$(resp.~$follower$)
is at $s$, then its parent is defined as null. Similarly, the $child$ of a $master$ or $follower$ robot is the node to which the robot will move at the end of the current iteration.

Since the graph is anonymous, the $parent$  and $child$ of a robot are identified by the port numbers through which its parent and child can be reached from
the robot's current position, respectively.
%the current position of the robot, respectively.
Initially at $s$, all the robots have $parent=NULL$ and $child=0$.

%We also use another variable $child'$. In any iteration $j$, for a master or follower robot, the variable $child'$ stores the value of $child$ in the $(j-1)$-th iteration.

We first describe the high level description of the task the robots collectively execute in Phase 2. Let $M_p$ be the unique $master$ robot at $v_p$. Let $M_{p-1}, M_{p-2}, \cdots, M_1$ be the $follower$ robots at the nodes $v_{p-1}, v_{p-2}, \cdots, v_1$, respectively, such that $v_j$ is the parent of the robot $M_{j+1}$, $v_1=s$ and $M_1$ is the leader that is elected in Phase 1 of current iteration. In this phase, the $master$ robot $M_p$ searches for an empty neighbor of its current node $v_p$. While searching, the $master$ robot visits the neighbors of $v_p$ using the edges incident to $v_p$ in the increasing order of their port numbers, starting from 0 until it finds an empty node. Intuitively, if $master$ finds an empty node, it goes to $v_{p-1}$ and `informs' $M_{p-1}$ to move to $v_p$ and then $M_p$ moves to that empty node. The robot $M_{p-1}$, after learning the information that $M_p$ is going to leave $v_p$, `informs' $M_{p-2}$ to occupy $v_{p-1}$ and then move to $v_{p}$. This procedure of information exchange and moving forward goes on until $M_1$ moves to $v_2$ and with this, Phase 2 ends.

\begin{figure}[ht!]
\begin{center}
\subfigure{\includegraphics[width=.23\textwidth]{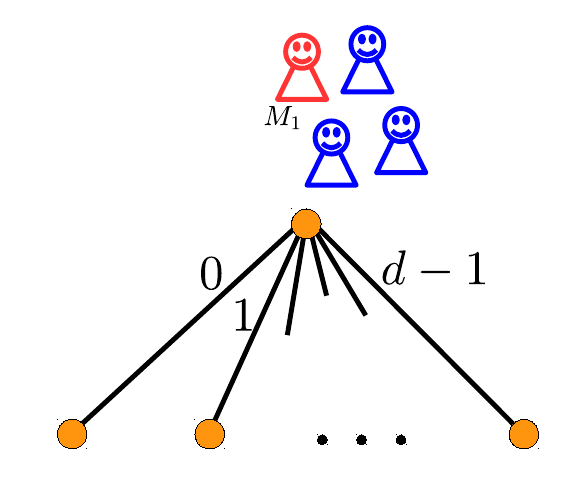}
\label{fig:aaaa-1}
}
\hfill
\subfigure{\includegraphics[width=.23\textwidth]{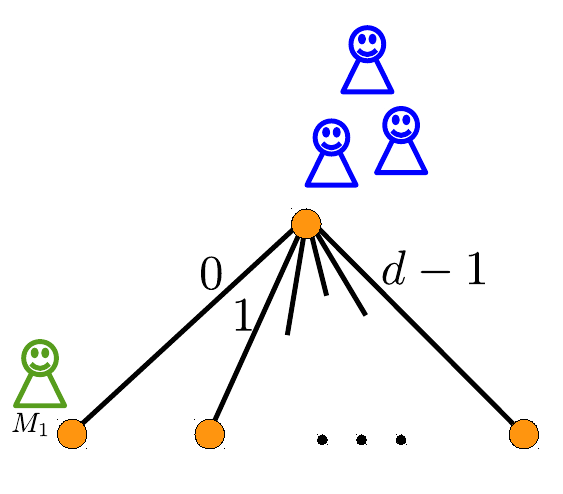}
\label{fig:aaaa-2}
}
\end{center}
\caption{Activities in the first iteration. (a) $M_1$ is elected as the leader   (b) $M_1$ becomes the $master$, finds and occupies the empty node $s(0)$.}
\label{fig:aaaa}
\end{figure}

\begin{figure}[ht!]
\begin{center}
\subfigure[ ]{\includegraphics[width=.23\textwidth]{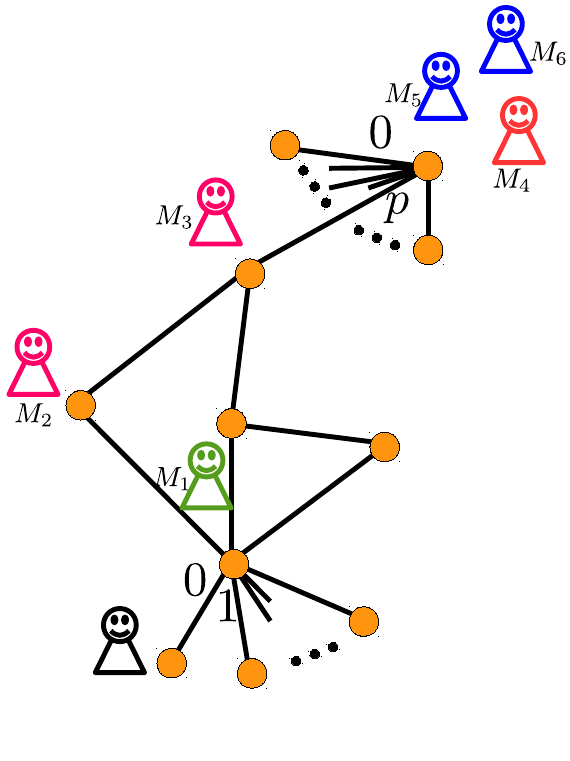}
\label{fig:bbbb-1}
}
\hfill
\subfigure[ ]{\includegraphics[width=.23\textwidth]{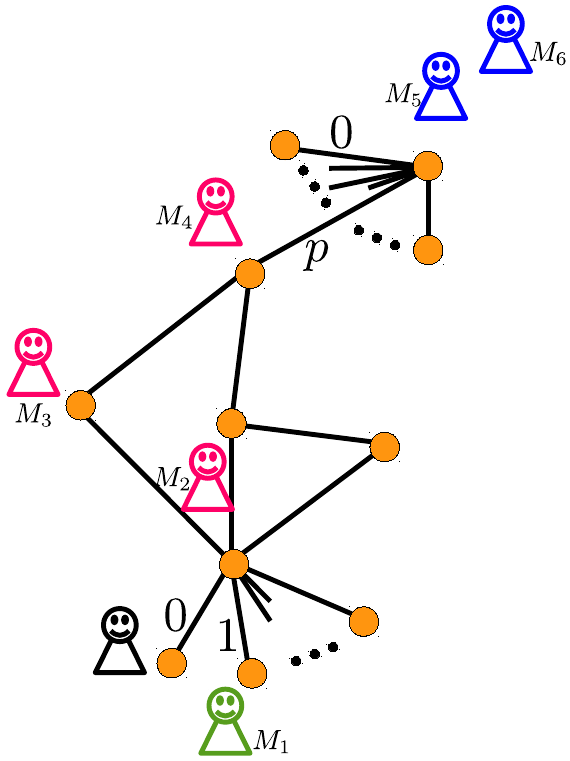}
\label{fig:bbbb-2}
}
\end{center}
\caption{Activities in an intermediate iteration. (a) A leader is elected at the end of Phase 1.  (b) The scenario at the end of Phase 2 when the master robot finds an empty neighbor.}
\label{fig:bbbb}
\end{figure}

\begin{figure}[ht!]
\begin{center}
\subfigure[ ]{\includegraphics[width=.23\textwidth]{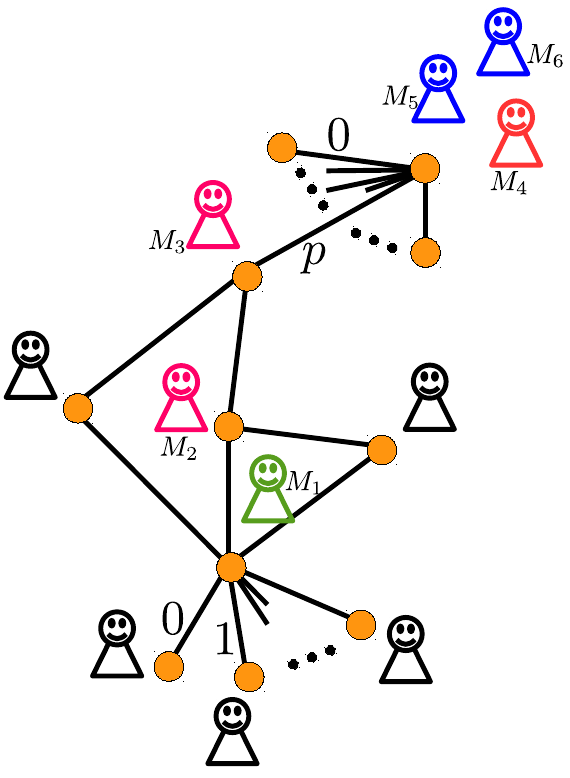}
\label{fig:cccc-1}
}
\hfill
\subfigure[ ]{\includegraphics[width=.23\textwidth]{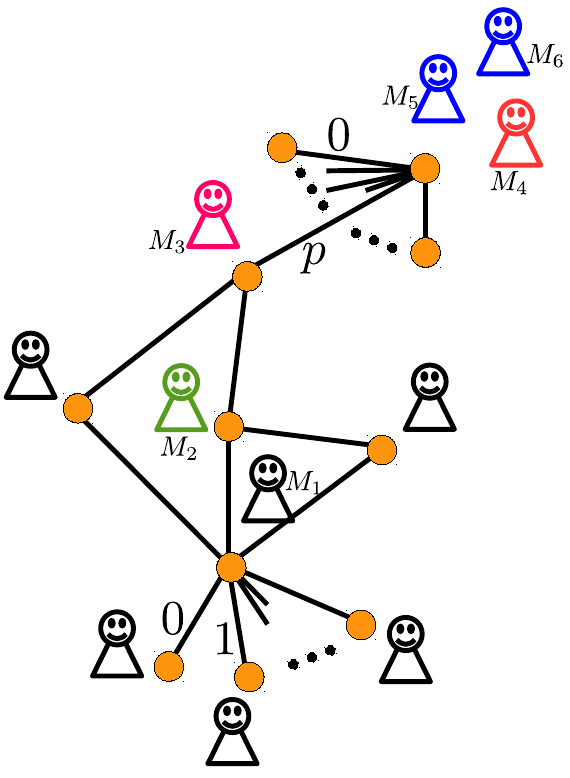}
\label{fig:cccc-2}
}
\end{center}
\caption{Activities in an intermediate iteration when a master robot becomes $idle$. (a) A leader is elected at the end of Phase 1. (b) The scenario when the master does not find any empty neighbor. The master robot becomes $idle$ and the follower robot at the parent of the master robot becomes the new master.}
\label{fig:cccc}
\end{figure}

\begin{figure}[ht!]
\begin{center}
\subfigure[ ]{\includegraphics[width=.23\textwidth]{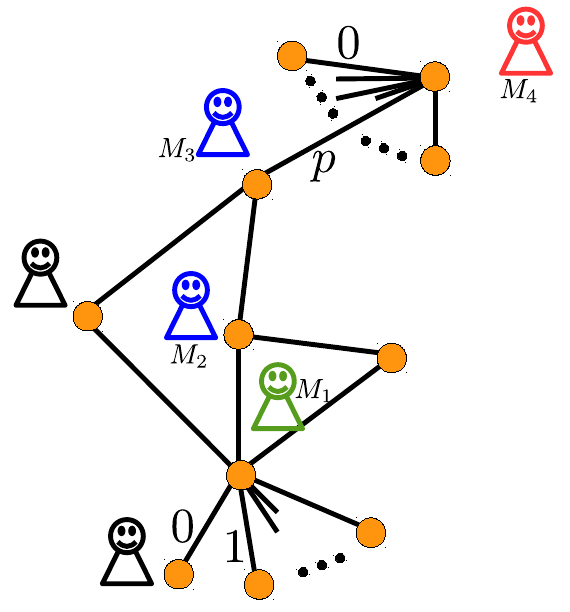}
\label{fig:dddd-1}
}
\hfill
\subfigure[ ]{\includegraphics[width=.23\textwidth]{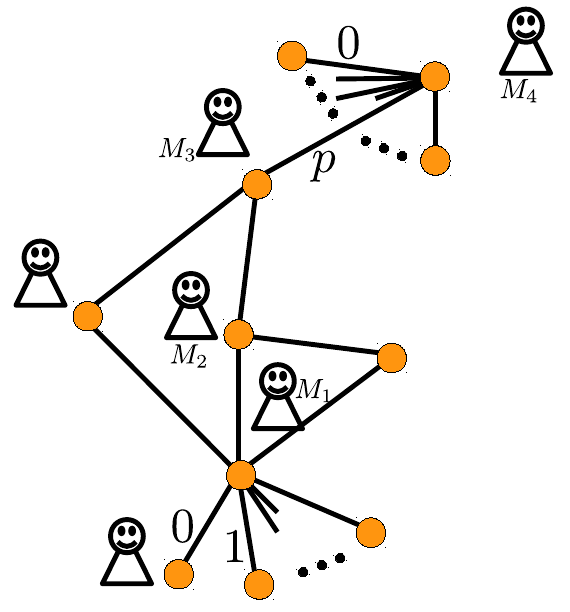}
\label{fig:dddd-2}
}
\end{center}
\caption{Activities in the last iteration. (a) $M_4$ is the only robot present at the root and it becomes the leader.  (b) As a $follower$, $M_4$ sends the termination information that reaches to the $master$ via the all the $follower$s who are present on the path to the $master$. Then $M_4$ occupies the root. Then all robots who were not already $idle$, become $idle$. }
\label{fig:dddd}
\end{figure}

If the $master$ robot does not find any empty neighbor, it informs the same to the robot $r_{p-1}$ and then changes its $status$  to $idle$. The robot $r_{p-1}$, upon learning that the $master$ is  $idle$, changes its $status$  to $master$ and initiates searching for an empty neighbor. This process continues until an empty node is found and then occupied by the $master$ robot.
phase ends when $r_1$ leaves $s$ and occupies a previously occupied node by a $follower$ robot or empty.

There are certain difficulties in implementing the above explained procedures.
\begin{enumerate}
\item How the robot learn through which port from the current node it can reach to its parent?
\item How to propagate `information' to the robot residing at the parent node?
\item How to learn through which port $v_p$ is reachable from $v_{p-1}$?

\end{enumerate}

The learning of the port that leads to the parent by a robot is  gained in the previous iteration itself. To elaborate this, first consider the very first iteration of our algorithm. In Phase 1 of the first iteration, a robot is elected as leader and the $status$  of this robot is $master$. In Phase 2, this $master$ finds $s(0)$ empty and hence moves to $s(0)$. While moving to $s(0)$, this $master$ robot learned the incoming port  $q$ of the edge from $s$ to $s(0)$ and set $parent=q$. Suppose that the $master$ and every $follower$ robot knows their parent port till iteration $t$. Then consider the execution of $t+1$-th iteration. In Phase 1 of this iteration, a robot $r$ is elected as leader and sets its $status$  as $follower$. In Phase 2, the $master$ robot, after finding an empty neighbor, assigns its parent port to the port number through which it entered to its empty neighbor. The $follower$ robots at the time of occupying the new node updates its parent port through which it entered to its new position.

The second and third difficulties can be resolved together as follows. First, we explain below how a robot $r$ can exchange a binary string $\alpha$ with a robot $r'$ which is reachable through the port number $p$ from its current node.

\noindent {\bf Message transmission using the movement of mobile robots:}
Let a robot $r$ decided to transmit the string $\alpha$ to a robot $r'$ reachable from the current node of $r$ through port $p$.

First, $r$ waits for the first available 4-dedicated round. Call this round as the 1st 4-dedicated round.
Then  for each $ i \ge 1$, the robot moves through port $p$ in the $i$-th  4-dedicated round if the $i$-th bit of $\alpha$ is 1 and comes back to $v$ in the next round. On the receiving end of this communication, the robot $r'$  decodes $\alpha$ by identifying the event $increase=true$ in every 4-dedicated round at its current node as a 1  and identifying an event $increase=false$ in every 4-dedicated round as 0.  The difficulty here is how the receiving robot  knows when this process of  communication ends. To overcome this, we use the idea of { \it transformed binary encoding}. For any binary string $\alpha$, replace every `1' by `11' and every `0' by `10'. Note that the transformed binary encoding of any binary string can not contain the sub string `00'. Hence, the robot recognizesthe  observation of two consecutive zeros as the end of transmission.

Using this technique, the $master$ robot transmits one of the the following information to its parent in Phase 2.

\begin{enumerate}
\item I am a $master$ robot, I found an empty neighbor though port $p$. This message is encoded as $1011 \cdot B_{p}$, where the transformed binary encoding of the integer $p$ is denoted by $B_p$.
\item I am a $master$ robot, I did not find any empty neighbor.  This message is encoded as $1111$.
\end{enumerate}

Upon receiving the message (1), the $follower$ robot decodes the integer $p$  transmits the following message to the robot connected through its  parent port:(3)``I am a $follower$ robot, I am going to move forward through port $child$. This message can be encoded as $1011\cdot B_{child}$. The $follower$ then moves through the port $p$, updates $parent$ as the incoming port at the destination node and $child=p$.
If a $follower$ robot receives the message (2), it changes its $status$  to $master$ and start vising each of its neighbors in the increasing order of the port number starting from port $child+1$.  When the $follower$ robot at $s$ receives the message (1) or (3), it moves through port $child$ and with this Phase 2 and hence the current iteration ends. If the $follower$ robot at $s$ receives the message (2), then it changes its $status$  to $master$, and start searching for an empty neighbor starting from port $child+1$. The iteration ends once the robot leaves $s$.

%Please note here that while transmitting the message (1), transmitting the information of the port $child$ seems unnecessary as in the next iteration this information will be any way supplied. But the information of the $child$ will be needed in the last iteration as follows. After Phase 1 of the  the last iteration, there will be only one robot at $s$ and this robot learns this by observing $alone=true$. The objective of the last iteration is to let all the robots which are not $idle$ yet, inform to become $idle$. To do this, the robot at $s$ moves through port $child$ in a 0-dedicated round and then come back to $s$ and become $idle$. In fact, the 0-dedicated rounds are reserved only for the last iteration. Any $follower$ robot at a node $v$, after observing $increase=true$ in a 0-dedicated round, moves through its $child$ port and then becomes $idle$ after coming back to $v$. Once the master robot observes $increase=true$ in a 0-dedicated round becomes $idle$. With this the algorithm terminates.

We now give the detailed descriptions of the algorithms of the robots with different $status$  in Phase 2.

\noindent {\bf Description of subroutine {\textsc{Follower($M$)}}:} A robot $M$ with $status$  $follower$ executes the subroutine {\textsc{Follower($M$)}} (Algorithm \ref{algo:follower}). The robot uses a binary variable  $forward$. If the robot is at $s$, i.e., this robot is elected as leader in Phase 1 of the current iteration, then $forward=0$. For the other $follower$ robots, which are not in $s$, have $forward=1$. If the robot at $s$ is not the only robot at $s$, i.e., $alone=false$, then this robot moves through its $child$ port in the next available 2-dedicated round and comes back to $s$ in the next round (steps 5-8). The $follower$ robots which are not is $s$, waits until $increase=true$ in a 0-dedicated round or in a 2-dedicated round. If the robot finds $increase=true$ in a 2-dedicated round, then it learns that the current iteration is not the last iteration and send the same information to the robot present in the adjacent node by moving through its $child$ port in the next 2-dedicated round and comes back to its position in the next round. Once this step is executed, the robot calls subroutine \lsl, where it waits until $increase=true$ in a 4-dedicated round. After that, it learns the message from the robot connected through its child port by identifying $increase=true$ as a 1 in a 4-dedicated round and $increase=false$ as a 0 in the same round until two consecutive 4-dedicated rounds have $increase=false$. It then decodes the integer $p$  and the three  bit string $\gamma$. If $\gamma=1111$, then the robot learns that the robot connected through its child port is a $master$ robot and it is now $idle$. After learning this information, the $follower$ robot changes its $status$  to $master$ and set $recent=true$. After that, it starts executing the subroutine \master $(M)$ (Algorithm \ref{algo:master}). If $\gamma$ is either 1011 or 1110, then the robot calls subroutine \ssl $(\alpha,parent)$, where $\alpha=1110 \cdot B_{child}$. After this transmission is complete, the robot moves through port $p$ in the next 5-dedicated round and updates $parent$ as the incoming port through which it entered the empty node, and $child=p$. In step 4, if the robot observes $increase=true$ in a 0-dedicated round, it learns that the current iteration is the last iteration and the $follower$ robot connected through its parent port is now $idle$. It moves through port $child$ in the next available 0-dedicated round and changes its $status$  to $idle$.

\noindent {\bf Description of the subroutine \master ($M$):} A robot with $status$  $master$ executes the subroutine \master ($M$) (Algorithm \ref{algo:master}). If $recent=true$, then the robot was a $follower$ robot at the beginning of the current iteration and changed its $status$  to $master$ because the previous $master$ robot did not find any empty neighbor in  the current iteration and is now $idle$. In this case, the robot executes from step 9 of the algorithm, according to which, it starts searching for an empty neighbor in its neighborhood (steps 9-15). If $recent=false$, then the robot waits until $increase=true$ in a 0-dedicated round or in a 2-dedicated round. If it observes $increase=true$ in a 2-dedicated round, then it learns that the current iteration is not the last iteration of the algorithm. It then starts searching for an empty neighbor in its neighborhood (steps 9-15). If an empty neighbor is found through port $p'$, then the robot executes the subroutine \ssl~with $\alpha=101 \cdot B_{p'}$ through the port $parent$. It then moves to its empty neighbor in the next 5-dedicated round through the port $p'$, updates $child=0$, $parent$ as the port through which it entered to this empty node, and sets $recent=false$.

\noindent{\bf Description of subroutine \actv ($M$):} The $active$ robots at $s$ execute the subroutine \actv $(M)$ in Phase 2. The robots will execute the subroutine \lsl~to learn the  port $p$ and $\gamma$ which was transmitted by the robot connected through its child port by executing the subroutine \ssl. If $\gamma=1111$, then the robots at $s$ learn that the robot connected through its child port was a $master$ robot, and it does not find any empty neighbor. Therefore, the $follower$ robot at $s$ is going to become $master$ and will start searching for an empty neighbor starting from port $child$. Since 3-dedicated rounds are used in search of an empty neighbor by a $master$ robot, the $active$ robots observe how many times the event $decrease=true$ in a 3-dedicated round occurs at $s$. If there are total $j$ times $decrease=true$ occurs then that signifies $s(child+1), s(child+2), \cdots, s(child+j)$ are full. Accordingly, the robots at $s$ update their $child=child+j+1$.

A flow chart of the different phases in an iteration and different steps inside a phase is shown in figure \ref{fig:flowchart}.

\begin{figure*}[ht!]
    \centering
\includegraphics[width=1\textwidth]{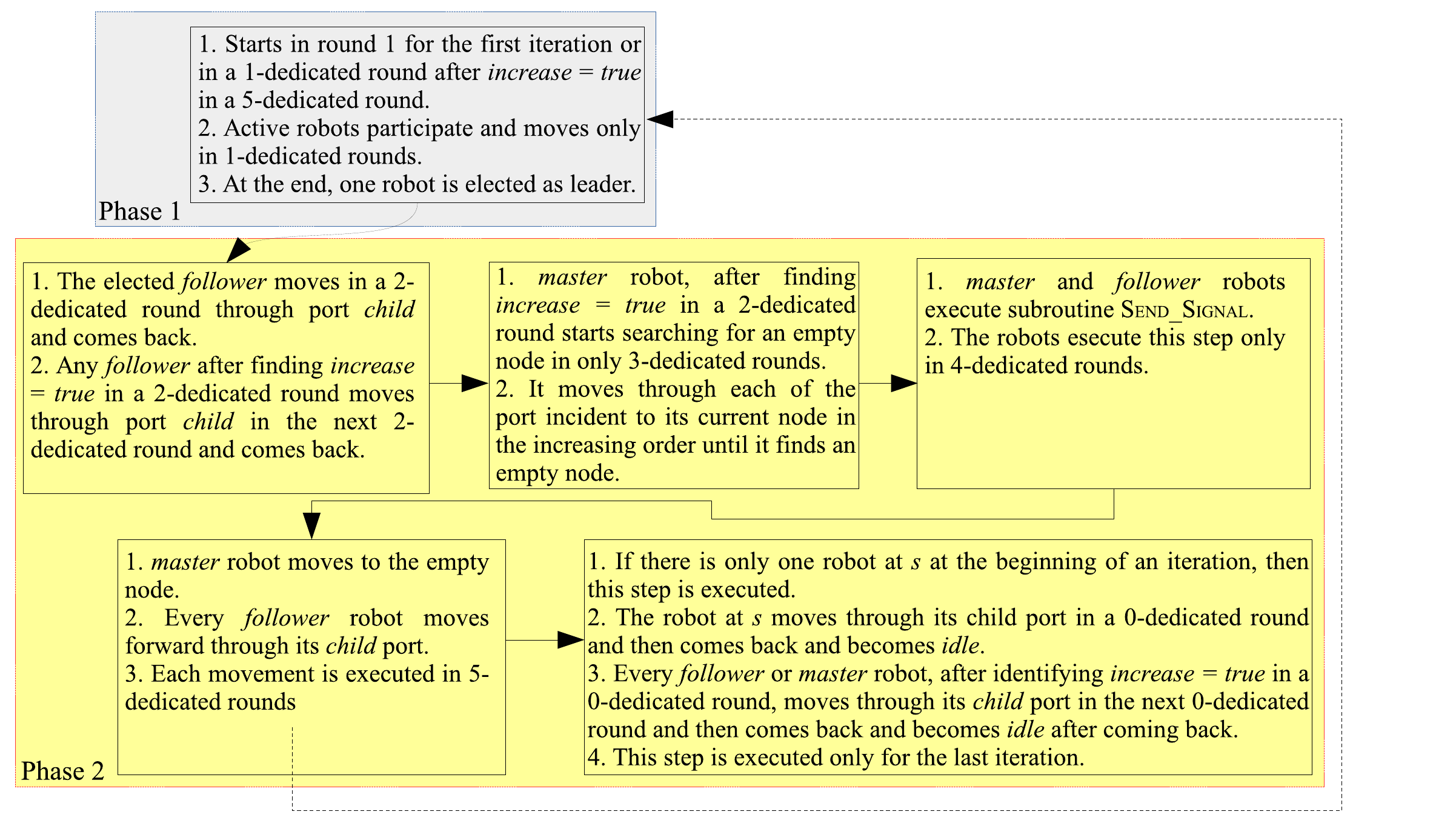}
    \caption{Activities in Phase 1 and Phase 2 of an iteration}
    \label{fig:flowchart}
\end{figure*}

\subsection{Correctness and Analysis}
The following lemma will be useful to show that a unique leader is elected in Phase 1 of any iteration.
\begin{lemma}
In Phase 1 of any iteration of our algorithm, for $j \ge 1$, let $U_j$ be the set of robots at $s$ and $U'_j$ be the set of robots at $s(\delta-1)$ before the call of the subroutine \pb $(M,j)$ by every robot in $U_j \cup U'_j$. If $|U_j|>1$ then, the following statements are true.

\begin{enumerate}\label{lem:phase1}
    \item All the robots in $U_j$ have $engage=true$,  $candidate=false$,  $move=0$, $election=false$.
    \item All the robots in $U'_j$ have $engage=false$,  $candidate=false$,  $move=1$, $election=false$.
    \item Let $ U_j(1)$ be the robots in $U_{j}$ with $j+1$-th bit at its reverse label 1 and $U_j(0)=U_j \setminus U_j(1)$. If $U_j(1)=U_j$ or $U_j(0)=U_j$,
    then, $U_{j+1}=U_{j}$ and $U'_{j+1}=U'_{j}$; otherwise, $U_{j+1}=U_j(0)$, $U'_{j+1}=U'_{j}\cup U_j(1)$.

    %\item At the beginning, all robots at $s$ are active, $engage=true$,  $candidate=false$,  $move=0$, $election=false$.
   % \item Exactly one robot changes its status to either master to follower at the end of Phase 1.
\end{enumerate}
\end{lemma}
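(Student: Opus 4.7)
The plan is to argue by induction on $j$. For the base case $j=1$, the initialization at the beginning of Phase 1 puts every active robot at $s$ with $engage=true$, $move=0$, $candidate=false$, $election=false$, and no robot yet at $s(\delta-1)$; so $U_1$ satisfies statement 1 and $U'_1=\emptyset$ makes statement 2 vacuous.

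For the inductive step, I would assume statements 1 and 2 hold for $j$ with $|U_j|>1$ and trace the six $1$-dedicated rounds of \pb$(M,j)$. Because $|U_j|>1$, no robot is alone at $s$ in round 1, so no $candidate$ is produced during this call; hence nothing enters $s(\delta-1)$ in round 5, the round-6 wakeup guard ($engage=false$, $increase=true$) fails at $s(\delta-1)$, and the residents of $U'_j$ stay put with all four flags unchanged. I would then partition $U_j$ by the $j$-th bit of the reverse label into $U_j(0)$ and $U_j(1)$ and handle three cases: (i) if $U_j(1)=\emptyset$, nobody moves in round 1, the $decrease=true$ test in round 2 fails, and the remaining rounds are inert, so $U_{j+1}=U_j$; (ii) if $U_j(0)=\emptyset$, all of $U_j$ goes to $s(0)$ with $move=1$ in round 1, observes $increase=false$ at $s(0)$ at the start of round 3 (nothing followed from $s$ in round 2), hits the round-3 branch that resets $move=0$ and returns to $s$ through the recorded port $q$, and is inert in rounds 4--6, so $U_{j+1}=U_j$; (iii) in the split case, $U_j(1)$ moves to $s(0)$ in round 1 with $move=1$, $U_j(0)$ observes $decrease=true$ at $s$ in round 2 and follows to $s(0)$ with $move=2$, both groups return to $s$ in round 3 through the recorded port $q$, and in round 4 the $move=1$ robots (exactly $U_j(1)$) set $engage=false$ and walk to $s(\delta-1)$ via port $\delta-1$, producing $U_{j+1}=U_j(0)$ and $U'_{j+1}=U'_j\cup U_j(1)$ with the flag values demanded by statements 1 and 2.

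The main obstacle is the fine-grained bookkeeping of $move$, $increase$, $decrease$ and $engage$ across these six rounds. Three subtleties deserve care: first, verifying that the $move=1$ robots at $s(0)$ in the split case read $increase=true$ at the start of round 3 precisely because $U_j(0)$ entered $s(0)$ while they themselves remained stationary in round 2; second, showing that $U_j(0)$'s transient $move=2$ is implicitly restored to $0$ before the next call of \pb~so that statement 1 is re-established (the round-3 branch for $move=2$ is intended only as a one-round signal); and third, verifying that no robot already parked at $s(\delta-1)$ triggers the round-6 wakeup---this holds because the only inflow to $s(\delta-1)$ during rounds 1--6 of the current call is in round 4 from $U_j(1)$, and by the model the arrivals themselves observe $increase=decrease=false$ at the start of round 5 (since they moved rather than stayed), while the long-time residents $U'_j$ observe $increase=false$ at the start of round 6 because no movement at $s(\delta-1)$ occurs during round 5.
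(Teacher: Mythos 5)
Your proposal is correct and follows essentially the same route as the paper's proof: induction on $j$, with the inductive step tracing the six $1$-dedicated rounds of \pb{} through the same three cases (all $j$-th bits $0$, all $1$, or a split), concluding $U_{j+1}=U_j$ in the first two cases and $U_{j+1}=U_j(0)$, $U'_{j+1}=U'_j\cup U_j(1)$ in the third. The subtleties you flag---the $increase$ detection at $s(0)$ in round~3, the transient $move=2$ needing to be restored before the next call, and why the round-6 wakeup guard at $s(\delta-1)$ is not triggered by the round-4 arrivals (the guard requires $increase=true$ specifically in the 5th $1$-dedicated round)---are exactly the points the paper's case analysis relies on, and your resolutions of them match the intended behavior of Algorithm~\ref{algo:processbit}.
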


\begin{proof}
We prove the above statements using induction. Consider the execution of the algorithm for an iteration when there is at least two robots in $s$. Let $U$ be the set of all the robots at $s$.
Phase 1 of this iteration is executed by all the robots in $U$ where each robot $M$ in $U$ calls the subroutine \pb $(M,j)$  for $j=1,2,3, \cdots$ until $election=true$.

The statement (1) and (2) are trivially true for $j=1$, where $U_1=U$ and $U_1'=\emptyset$. To prove the base case for (3), consider the execution of the subroutine \pb~for $j=1$.
 Since $engage=true$  and $alone=false$ for each robot in $U_1$, each robot performs step 5 of the subroutine \pb~(Algorithm \ref{algo:processbit}) based on the first bit of $l(M)$. Consider the following two cases.

\begin{itemize}
\item {\bf [The first bit is 0 for all robots at $s$]:} In this case $U_1(1)=\emptyset$ and $U_1(0)=U_1$.
None of the robots move in step 6 and for each of the robots, the variable $move$ remains 0. In step 7, the robots found $decrease=false$ as no robots moved in the earlier 1-dedicated round. Therefore, none of the robots move in the 2nd 1-dedicated round as well (steps 8-9). The 3rd,  4th and  5th 1-dedicated  rounds are not executed by any of these robots as $move=0$ and $candidate=false$ for each of them. Hence, the robots stay at $s$ in these three consecutive 1-dedicated rounds. The robots does not execute  the 6th 1-dedicated round as well since  steps 22-26 require $candidate=true$, step 28 requires $increase=true$ ($increase=false$ for all the robots as there is no movement happened in the last 1-dedicated round). Hence $U_{2}=U_1$, $U_2'=U_1'=\emptyset$. Therefore, the statement (3) of the lemma is true for this case for $j=1$ the statements (1) and (2) are true for $j=2$.
\item {\bf [The first bit is 1 for all robots at $s$]:} In this case, $U_1(1)=U_q$ and $U_1(0)=\emptyset$. All  the robots in $U_1$ move to $s(0)$ in step 6  as the first bit is 1 for all the robots and  for each of the robots, the variable $move$ becomes 1. In the 2nd 1-dedicated round, these robots do nothing as $move=1$ for each of them (step 7).  Therefore, no  robot moves from $s$ to $s(0)$ in the 2nd 1-dedicated round and all the robots have $move=1$ after 2nd 1-dedicated round.
In the 3rd 1-dedicated round, the robots in $s(0)$ (with $move=1$) set $move=0$ and move back to $s$ as they found $increase=false$ in the 2nd 1-dedicated round. Similar to the earlier case, the robots do not participate in the 4th, 5th and the 6th 1-dedicated round of this subroutine. Hence $U_{2}=U_1$, $U_2'=U_1'=\emptyset$. Therefore, Hence the statement (3) of the lemma is true for this case for $j=1$ the statements (1) and (2) are true for $j=2$.

\item {\bf [At least two robots have different first bit]:} In this case both $U_1(1)$ and $U_1(0)$ are non empty.
The set $U_1(0)$ is the set of robots that have the first bit 0 in the reverse binary label and $U_1(1)$ is the set of robots that have the first bit 1 in the reverse binary label. In the 1st 1-dedicated round, all the robots in $U_1(0)$ stay at $s$ and the robots in $U_1(1)$ move to $s(0)$. Thus, the set of robots $U_1(0)$ observe $decrease=true$ in the 1st 1-dedicated round. Hence, in the 2nd 1-dedicated round, the robots at $s$ (the set of robots $U_1(0)$ with $move=0$)   move to $s(0)$ after setting $move=2$. In the 3rd 1-dedicate round, the set of robots $U_1(0)$  with $move=2$ returns back to $s$ and do nothing in the 4th 1-dedicated round. The set of robots $U_1(1)$, after observing  $increase=true$ in the 3rd 1-dedicated round  return back to $s$ and then in the 4th 1-dedicated round, they move to $s(\delta-1)$ after setting $engage=false$. Hence at the end of 4th 1-dedicated round, the robots in $s$ are the set of robots in $U_1(0)$ with $engage=true$,  $candidate=false$,  $move=0$, $election=false$. The robots in $s(\delta-1)$ are the set of robots $U_1(1)$ with $engage=false$,  $candidate=false$,  $move=1$, $election=false$.
 None of the robots in $U_1(0)$ and $U_1(1)$ execute the 5th and 6th 1-dedicated round, since $candidate=0$ for each of them. Therefore, $U_2=U_1(0)$ and $U_2'= U_1(1)=U_1' \cup U_1(1)$. This proves that the statement (3) of the lemma is true for $j=1$ and the statements (1) and (2) are true for $j=2$.
\end{itemize}

%\begin{comment}
Therefore, the statements of the lemma are true for $j=1$.

Suppose that the statements (1)-(4) are true for some integer $j'>1$. Consider the execution of the subroutine \pb~for the integer $j'+1$. By induction hypothesis, before calling subroutine \pb~for $j'+1$, all the robots in $U_{j'}$ have $engage=true$,  $candidate=false$,  $move=0$, $election=false$.

     %{$|U_{j'}|=1$:} In this case, according to step 3 of the Subroutine \pb~(Algorithm \pb finds $alone=true$ in the 1st 1-dedicated round and hence set $candidate=1$. This robot does nothing in the next three 1-dedicated rounds. In the 5th 1-dedicated round, this robot moves to $s(\delta-1)$ (according to step 24 and step 26 of Algorithm \ref{processbit}). It changes its status to  master in the first iteration and follower for the other iterations, in the 6th 1-dedicated round and moves back to $s$. The robots in $U'_{j'}$ comes back to $s$ after setting $election=true$, $move=0$, $engage=true$.

     Since $engage=true$  and $alone=false$ for each robot in $U_j$ (as number of robots in $s$ is more than 1), each of these robots performs step 5 based on the $(j'+1)$-th bit of its reverse binary label.

     Consider the following   cases.
    \begin{itemize}
    %\begin{comment}
        \item {\bf [None of the robots in $U_j$ have the $(j'+1)$-th bit 1]:} In this case $U_{j'}(1)=\emptyset$ and $U_{j'}(0)=U_{j'}$. Note that the labels of the robots in $U_{j'}(0)$ are of two types: (a) the labels have $(j'+1)$-th bit 1 in case the length of the label is at least $j+1$ (2) the labels with length at most $j'$.
        None of these robots in $U_{j'}(0)$ moves in step 6 and hence for each of these robots, the variable $move$ remains 0. In step 7, the robots in $U_j$ found $decrease=false$ as no robots moved in the earlier 1-dedicated round from $s$. Therefore, none of the robot moves in the 2nd 1-dedicated round as well (steps 8-9). The 3rd, 4th and  5th 1-dedicated rounds are not executed by any of these robots as $move=0$ and $candidate=false$ for each of them. Hence, the robots stay at $s$ in these three consecutive 1-dedicated rounds. The robots does not execute the 6th 1-dedicated round  since  steps 22-26 require $candidate=true$, step 27 requires $increase=true$ in the last 1-dedicated round ($increase=false$ in the last 1-dedicated round for all the robots as there is no movement happened). Hence $U_{j'+1}=U_{j'}(0)=U_{j'}$, $U_{j'+1}'=U'_{j'}\cup U'_{j'}(1)=U'_{j'}$.
   %\end{comment}
\item {\bf [The $(j'+1)$-th bit is 1 for all robots in $U_j$]:} In this case $U_{j'}(1)=U_{j'}$ and $U_{j'}(0)=\emptyset$.
All  the robots move to $s(0)$ in step 6 and hence for each of the robots, the variable $move$ becomes 1. In step 9, these robots in $U_j$ do nothing as $move=1$ for each of them.  Therefore, no  robot moves in round 2 from $s$ to $s(0)$.  Since in the 2nd 1-dedicated round, the robots $U_j$ in $s(0)$ (with $move=1$) found $increase=false$, therefore, they set $move=0$ and move back to $s$.
Similar to the earlier case, the robots do not participate in the 4th, 5th and the 6th 1-dedicated rounds
of this subroutine. Hence Hence $U_{j'+1}=U_{j'}(0)=U_{j'}$, $U_{j'+1}'=U'_{j'}\cup U'_{j'}(1)=U'_{j'}$.

\item {\bf [At least one robot has $(j+1)$-th bit 1 and $U_{j'}(1) \ne U_{j'}$]:}   In this case $U_{j'}(1)\ne \emptyset$ and $U_{j'}(0) \ne \emptyset$.

In the first 1-dedicated round, all the robots in $U_{j'}(0)$  stay at $s$ and the robots in $U_{j'}(1)$  move to $s(0)$. Therefore, the robots in $U){j'}(0)$ (with $move=0$) observe $decrease=true$ in the 1st 1-dedicated round and learns that a set of robots had left $s$ in the 1st 1-dedicated round. Hence, these robots move to $s(0)$ after setting $move=2$ (according to step 8). In the 3rd 1-dedicated round, the set of robots $U_{j'}(0)$  with $move=2$ returns back to $s$ and do nothing in the 4th 1-dedicated round. The set of robots $U_{j'}(1)$  found $increase=true$ in the 2nd 1-dedicated round  and hence return back to $s$ in the 3rd 1-dedicated round (according to step 12) and then move to $s(\delta-1)$ after setting $engage=false$ in the 4th 1-dedicated round (according to step 19). Hence at the end of the 4th 1-dedicated round, the robots in $s$ are the set of robots in $U_{j'}(0)$ with $engage=true$,  $candidate=false$,  $move=0$, $election=false$ and the robots in $s(\delta-1)$ are the set of robots $U'_{J'} \cup U_{j'}(1)$ with $engage=false$,  $candidate=false$,  $move=1$, $election=false$.
 None of the robots execute the 5th and the 6th 1-dedicated round since $candidate=0$ for each of them. Therefore, $U_{j'+1}=U_{j'}(1)$ and $U_{j'+1}'=U_{j'}\cup U_{j'}(1)$.
\end{itemize}
Therefore by induction, the statements of the lemma are true for all $j$.
\end{proof}

The following lemma proves that one robot is always elected as leader in Phase 1.
\begin{lemma}\label{lem:election}
If $m> 1$ robots are at $s$ in the beginning of some iteration, then at the end of Phase 1 of that iteration, exactly one robot changes its $status$  to either $master$ or $follower$.
\end{lemma}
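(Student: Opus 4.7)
The plan is to use Lemma~\ref{lem:phase1} repeatedly and argue that, because the labels are all distinct, the set $U_j$ of robots at $s$ must shrink to a single robot, after which the final rounds of the corresponding call to \pb~elect that robot as the unique leader and notify every other active robot that the election is over.

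First I would prove that there is a smallest index $j^{\ast}$ with $|U_{j^{\ast}}| = 1$. By Lemma~\ref{lem:phase1}(3), $|U_{j+1}| < |U_j|$ precisely when $U_j(0)$ and $U_j(1)$ are both non-empty, and $|U_{j+1}| = |U_j|$ otherwise. Since the $m$ labels are pairwise distinct positive integers, padding their reverse binary strings with zeros beyond their own length still produces pairwise distinct strings; hence any two robots in a current $U_j$ with $|U_j| > 1$ differ at some bit position, so a split must occur by some $j$ no larger than $\max_{M \in U_1}|l(M)|$. Iterating, $|U_j|$ strictly decreases at every split and therefore reaches $1$ after finitely many calls to \pb.

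Next I would analyse the call \pb$(M, j^{\ast})$ when $|U_{j^{\ast}}| = 1$. Let $M^{\ast}$ be the sole robot in $U_{j^{\ast}}$. In the 1st 1-dedicated round of this call it observes $alone = true$ and sets $candidate = true$, while every robot of $U'_{j^{\ast}}$ has $engage = false$ by Lemma~\ref{lem:phase1}(2) and does nothing. Since no robot leaves $s$ in that round, the 2nd, 3rd, and 4th 1-dedicated rounds are vacuous for both groups. In the 5th 1-dedicated round $M^{\ast}$ sets $election = true$ and moves through port $\delta-1$ to $s(\delta-1)$; this is the unique movement in that round, so every robot of $U'_{j^{\ast}}$ observes $increase = true$. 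Consequently, in the 6th 1-dedicated round each robot of $U'_{j^{\ast}}$ sets $engage = true$, $election = true$, $move = 0$ and returns to $s$, while $M^{\ast}$ itself returns to $s$ along the port $q'$ it just recorded and updates its status to $master$ if this is the first iteration of the main algorithm and to $follower$ otherwise.

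Finally I would observe that a status change from $active$ to $master$ or $follower$ can occur only inside the branch of \pb~guarded by $candidate = true$ in the 6th 1-dedicated round, and Lemma~\ref{lem:phase1}(1)--(2) forces $candidate = false$ for every robot of $U_j \cup U'_j$ at the start of every earlier call; hence the unique $M^{\ast}$ singled out above is the only robot that ever enters that branch, and it does so exactly once. The main delicate point is the termination argument in the second paragraph: because the reverse binary strings of distinct positive integers may have different lengths, one has to justify that zero-padding them still separates any two at a finite bit position, which is precisely why the subroutine processes reverse labels rather than ordinary labels.
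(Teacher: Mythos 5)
Your proposal is correct and follows essentially the same route as the paper's own proof: you use statement (3) of Lemma~\ref{lem:phase1} together with the fact that reverse labels end in a $1$ (so zero-padding preserves distinctness) to show $U_j$ shrinks to a singleton within $z+1$ calls of \pb, and then trace the 1st, 5th, and 6th 1-dedicated rounds of the final call to see the lone robot set $candidate=true$ and become $master$ or $follower$ while the robots at $s(\delta-1)$ are notified and return. The paper phrases the shrinkage step as a contradiction at index $z+1$ rather than your direct padding argument, and your closing uniqueness paragraph makes explicit what the paper leaves implicit, but these are presentational differences only.
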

\begin{proof}
Consider two robots in $U_j$ executing subroutine \pb~for the integer $j$ such that either they have different $j$-th bit in their reverse label or one robot has 1 in the $j$-th bit of its reverse label and the length of the reverse label of the other robot is less than $j$. By statement (3) of Lemma \ref{lem:phase1}, exactly one of those two robots stays at $s$ after the execution of the subroutine.
%By statement (3) of Lemma \ref{lem:phase1}, for any two robots in $U_j$ executing subroutine $ProcessBit$ for the integer $j$ and either they have different $j$-th bit or one robot has the $j$-th bit one and the length of the label of the other robot is less than $j$, then exactly one of them stays at $s$ after the execution of the subroutine.
Let $z$ be the length of the reverse label of the robot with maximum id at $s$ at the beginning of Phase 1. We claim that there exists an integer $t \le z$ such that after the execution of the subroutine \pb~for the integer $t$, exactly one robot remains at $s$. In other words, $U_{t}=1$.

To obtain a contradiction, suppose that $|U_{z+1}|>1$ and let $M_1, M_2$ be two robots at $s$ before the execution of the subroutine \pb~for $z+1$. This implies that both $M_1$ and $M_2$ remains at $s$ after each call of the subroutine \pb~for the integers $j=1,2, \cdots, z$. Let the length of the reverse labels of $M_1, M_2$ be $l_1$, $l_2$ respectively such that both $l_1$, $l_2$ are less than or  equal to $z$. Without loss of generality, let $l_1 \leq l_2$. This implies that for each $j=1,2, \cdots, l_1$, both $M_1$ and $M_2$ have the same $j$-th bit on their reverse label, and for $j=l_1+1, l_1+2,\cdots, l_2$, each of the $j$th bits in the reverse label of $M_2$ is 0. If $l_1=l_2$, then this is a contradiction to the unique labels of the robots. Else if $l_1<l_2$, this is a contradiction as all the last $(l_2-l_1)$ bits in $M_2$'s reverse label can not be 0 since binary representation of any integer label must start with a 1.

Now consider the execution of the subroutine \pb~for  for the integer $t$. According to step 3 of the Subroutine \pb~(Algorithm \ref{algo:processbit}), the robot at $s$ finds $alone=true$ in the 1st 1-dedicated round and hence set $candidate=1$. This robot does nothing in the next three 1-dedicated rounds. In the 5th 1-dedicated round, this robot moves to $s(\delta-1)$ (according to step 24 and step 26 of Algorithm \ref{algo:processbit}). It changes its $status$  to  $master$ in the first iteration and $follower$ for the other iterations, in the 6th 1-dedicated round and moves back to $s$. The robots in $U'_{t}$ comes back to $s$ after setting $election=true$, $move=0$, $engage=true$.
Hence, exactly one robot changes its $status$  to either $master$ or $follower$ in Phase of the algorithm.
\end{proof}

The following two lemmas (Lemma \ref{lem:empty} and Lemma \ref{lem:full}) help us to show that some robots always find an empty neighbor for movement in Phase 2.

\begin{lemma}\label{lem:empty}
%\begin{lemma}
Before any iteration of the algorithm, if the number of robots at $s$ is at least 2, then  one of the following statements is true.
\begin{enumerate}
    \item The master robot has an empty neighbor
    \item one of the $follower$ robot has an empty neighbor.
    \item $s$ has an empty neighbor.
\end{enumerate}
%\end{lemma}
\end{lemma}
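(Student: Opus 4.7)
\emph{Proof plan.} My plan is to count the full nodes at the start of the iteration, use the connectivity of $G$ to locate an edge leaving the full set, and then rule out the possibility that the full endpoint of that edge is a position occupied only by an idle robot.

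Let $t$ denote the index of the iteration in question. I would first invoke the algorithm-level invariant asserted in the high-level description: in every non-terminating iteration exactly one empty node becomes full and no full node ever becomes empty. Writing $F\subseteq V$ for the set of full nodes at the start of iteration $t$, this invariant gives $|F|=t$. The $k$ robots partition into the active robots at $s$, the followers and the unique master on the current chain rooted at $s$, and any idle robots left behind in earlier iterations; since exactly one formerly-active robot departs $s$ per completed iteration (as the newly-elected leader that then shifts forward along the chain), the number of active robots at $s$ is $k-(t-1)$. The hypothesis that at least two robots reside at $s$ therefore yields $t\le k-1\le n-1$, so $|F|=t<n$ and $V\setminus F$ is nonempty.

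Because $V\setminus F$ is nonempty and $G$ is connected, there exists an edge $uv$ with $u\in F$ and $v\in V\setminus F$; i.e., the full node $u$ has an empty neighbor. It remains to identify what kind of robot sits at $u$. I would show that $u$ cannot be a node whose only occupant is an idle robot: by inspection of \master, a robot turns idle only after probing every port $0,1,\dots,\delta-1$ at its current node without ever observing $alone=\mathit{true}$, so at that moment every neighbor of the node was already full; combined with the ``no full node ever becomes empty'' invariant, every neighbor of an idle-hosting node remains full for the remainder of the execution. Hence $u$ must be $s$, a follower's node, or the master's node, giving statement~(3), (2), or (1) of the lemma respectively.

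The chief technical obstacle I anticipate is the last step: verifying rigorously that a master probes every one of its $\delta$ ports before declaring itself idle (so that the ``all neighbors are full'' conclusion is legitimate), and establishing by induction over iterations the two global invariants ``$|F|$ grows by exactly one per non-terminating iteration'' and ``no full node ever becomes empty,'' via a careful walk through \master, \follower, and \actv. With those invariants in hand, the counting-plus-connectivity step above closes the lemma.
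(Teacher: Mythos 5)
Your proof is correct and rests on the same key idea as the paper's: a node hosting only an $idle$ robot cannot have an empty neighbor, because a $master$ goes $idle$ only after its port-by-port search finds every neighbor full, and full nodes stay full thereafter. Where you differ is in how you locate the critical full-node/empty-node edge and in the overall logical packaging. The paper argues by contradiction: assuming all three statements fail, it picks an empty node $v$ and splits into two cases (all neighbors of $v$ full, or all empty), reducing the second case to the first via a minimal-distance argument (a step whose write-up in the paper is in fact slightly garbled). You instead argue directly: count the full set $F$, show $V\setminus F\neq\emptyset$, and take a cut edge $uv$ with $u\in F$, $v\notin F$ guaranteed by connectivity --- a cleaner route that collapses the paper's two cases into one step. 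Two caveats. First, you invoke the invariant ``exactly one empty node becomes full per iteration and no full node becomes empty,'' which in the paper is statement (3) of Lemma \ref{lem:full}, proved \emph{after} and \emph{using} Lemma \ref{lem:empty}; as stated this is circular, but your closing remark about establishing the invariants by induction over iterations is exactly the right fix (the lemma at iteration $t$ needs the invariants only through iteration $t-1$, so a joint induction works), and the paper's own proof silently relies on the same persistence-of-fullness fact. Second, your claim that a robot turns $idle$ only after ``probing every port $0,\dots,\delta-1$'' is not literally what \master\ does: a recently promoted master resumes from port $child+1$, and the fullness of the nodes behind ports $0,\dots,child$ is itself part of the inductive invariant --- you flag this as your anticipated obstacle, and indeed it is the one spot where both your sketch and the paper's proof lean on the joint induction rather than on the pseudocode alone.
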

\begin{proof}
We prove the above statement by contradiction. Suppose that none of the above statements is true. Sine $n \ge k$ and the number of robots at $s$ before the iteration is at least 2, there exists at least one node in $G$ which is empty. Let $v$ be an empty node in $G$. There are two possible cases.
\begin{itemize}
    \item {\bf Case 1 (all neighbors of $v$ are full):} Since neighbor of $v$ can neither be $s$ (as statement 3 is not true by assumption), nor have a $master$ or $follower$ robot(as statement 1 and statement 2 are assumed to be false), therefore each of the neighbor of $v$ has robots which are $idle$. Let $v'$ be a neighbor of $v$ and $M$ be an $idle$ robot at $v'$. According to our algorithm,  a  $master$ robot can change its  $status$  to $idle$ in Phase 2 of some iteration, when in some 3-dedicated round if the $master$ robot does not find any empty neighbors (ref. steps 12-14). But this is a contradiction as $v$ is an empty neighbor of $v'$.
    \item {\bf Case 2 (all neighbors of $v$ are empty):} Let $d \ge 2$ be the smallest integer such that there exists a node $v'$ at distance $d$ from $v$ which is not empty and $v''$ is the node  adjacent to $v$ which is empty and at distance $d-1$ from $v$. Therefore, $v'$ is an empty node which has a full neighbor. This is similar to case 1 and will give a contradiction.

\end{itemize}
\end{proof}

%\begin{lemma}
%Let $v$ be the position of the $master$ robot after some iteration. Then there  is a simple path from $s$ to $v$ such that there is follower robot at each of the internal node of this path.
%\end{lemma}

\begin{lemma}\label{lem:full}
In any iteration of the algorithm, following statements are true.
\begin{enumerate}
    \item At the beginning of the iteration, there is a simple path $P$ from $s$ to the node where the $master$ node is present and the internal nodes of this path  contains $follower$ robots. Also, for any node $w$ in this path, the next node $w'$ is connected to $w$ through port $child$ of the robot present at $w$.
     \item All the robots which are not in any node in $P$ are $idle$.
    \item If the number of robots at $s$ is at least 2, then exactly one empty node becomes full and no full nodes become empty after the iteration.

\end{enumerate}
\end{lemma}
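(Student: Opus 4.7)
The plan is to prove all three statements simultaneously by induction on the iteration number $i$, combining the structural claims (1) and (2) about the configuration at the start of iteration $i$ with the dynamic claim (3) about the transition that Phase~2 performs. The base case $i=1$ is immediate: all $k$ robots are co-located at $s$ as $active$ and $G\setminus\{s\}$ is empty, so Lemma~\ref{lem:election} gives a unique master at $s$ and the path is the single-vertex $P=\{s\}$ with no internal nodes and no off-path robots. By Lemma~\ref{lem:empty} an empty neighbor $v$ of $s$ exists; the execution of \master$(M)$ walks to $v$ and settles, so exactly one empty node becomes full, no full node is vacated, and (1)--(3) hold.

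For the inductive step, assume the lemma at iteration $i$, giving a chain $P_i=s=w_0,w_1,\ldots,w_p$ with the master at $w_p$, followers at $w_1,\ldots,w_{p-1}$, and every other robot $idle$. Phase~1 of iteration $i+1$ elects (Lemma~\ref{lem:election}) one new follower at $s$, which inherits the $child$ value the $active$ robots maintained through \lsl{} during iteration $i$---this pointer equals the port from $s$ to $w_1$, an invariant I will track inductively. Statements (1)--(2) therefore continue to hold against the same path $P_i$. Phase~2 begins with the new follower at $s$ (having $forward=0$) driving a $2$-dedicated-round cascade up the chain that informs the master this is not the terminating iteration; the master then enters its search loop in \master$(M)$. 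Two sub-cases arise. If the master at $w_p$ finds an empty neighbor $u$ via port $prt$, it transmits $1011\cdot B_{prt}$ to its parent; each follower decodes in turn, retransmits $1110\cdot B_{child}$ upward, and in the next $5$-dedicated round shifts forward one hop along its own $child$ port. The new chain becomes $P_{i+1}=s,w_1,\ldots,w_p,u$; only $u$ transitions from empty to full, and no previously full node is vacated (each occupant is replaced by the robot behind it, and $s$ retains its remaining $active$ robots). If instead the master at $w_p$ fails, it signals $1111$ and becomes $idle$ in place; the follower at $w_{p-1}$ upgrades to master with $recent=true$ and restarts the search from $child+1$, thereby skipping the now-$idle$ robot at $w_p$. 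This promotion cascade may repeat, but Lemma~\ref{lem:empty} guarantees some node of $P_i$ has an empty neighbor, so the cascade terminates at some index $j$ whose new master locates an empty $u$; the forward shift then acts only on the surviving prefix, producing $P_{i+1}=s,w_1,\ldots,w_{j-1},u$, with $w_j,\ldots,w_p$ now off-path but each carrying an $idle$ robot, preserving (2). In either sub-case exactly one empty node becomes full and no full node is vacated, so (1)--(3) hold at iteration $i+1$.

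The hard part will be the port-pointer bookkeeping that underlies the entire case analysis. One has to verify inductively that each \master{} and \follower{} robot, on arrival at its new node, assigns $parent$ to the incoming port and $child$ to the port $p$ decoded from the just-received signal---so that $child$ at the new position really encodes the direction to the next node of the chain---and that the \actv{} robots at $s$ keep their $child$ in sync with the port $s\to w_1$ by passively decoding the $1110$-signals relayed through $w_1$ in each iteration. Once this pointer invariant is nailed down, statements (1) and (2) follow automatically from the chain shift, and (3) reduces to the observation that a successful iteration either appends a single new leaf $u$ to the path (master succeeded at $w_p$) or replaces the suffix $w_j,\ldots,w_p$ by a single new leaf $u$ (some promotion cascade succeeded), leaving $idle$ robots on the vacated interior nodes so that no previously full node becomes empty.
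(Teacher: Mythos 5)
Your proposal is correct and follows essentially the same route as the paper's own proof: induction on the iteration number, with Lemma~\ref{lem:election} handling Phase~1, Lemma~\ref{lem:empty} guaranteeing an empty neighbor somewhere on the path, and a case split between the master succeeding directly and a promotion cascade that leaves $idle$ robots on the vacated suffix while the surviving prefix shifts forward one hop. The only differences are cosmetic---you merge the paper's second and third cases (a $follower$ has an empty neighbor versus $s$ has an empty neighbor) into one cascade argument, and you flag the $parent$/$child$ port bookkeeping as needing inductive verification, which the paper likewise handles by narrating the pointer updates inside its case analysis rather than by a separate invariant.
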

\begin{proof}

We prove the above two statements using induction on the iteration number. The statements (1) and (2) are trivially true for iteration 1, as there is no $master$ robot at the beginning of iteration 1 and all the robots are in $s$. By Lemma \ref{lem:election}, at the end of Phase 1 of this iteration, exactly one robot at $s$ is elected as leader and changes its $status$  to $master$. Since in the first iteration,  $s(0)$ is empty and $child=0$ for this $master$ robot, according to step 24 of  Algorithm \ref{algo:master}, the $master$ robot moves through the port $child$, i.e., through port 0 and reached $s(0)$ in the next available 5-dedicated round. Since $decrease=true$ in a 5-dedicated round at $s$, therefore Phase 2 of this iteration ends. Hence, $s(0)$ is the only node that becomes full after the 1st iteration.

Suppose that the above statements are true for the $t$-th iteration for some $t \ge 1$. We show that the above statements are true for the $t+1$-th iteration as well. To prove this, consider the execution of the algorithm in the $t$-th iteration. By Lemma \ref{lem:election}, after Phase 1 of this iteration, exactly one robot is elected as leader and changes its $status$  to $follower$. By statement (1), there is a simple path from $s$ to the node where the $master$ node is present and  the robots at internal nodes of this path are $follower$. Also, by statement (2), all other robots which are not in any node in $P$ are $idle$.

Since the number of robots at $s$ is at least 2, by Lemma \ref{lem:empty}, one of the following cases happens. We show that in in each of the following case, after the $t$-th iteration, i.e., at the beginning of the $(t+1)$-th iteration, exactly one robot becomes full and no full node becomes empty.

\begin{itemize}
    \item {\bf [The $master$ has an empty neighbor]:}
      Let $v$ be the current position of the $master$ and let $s,v_1,\cdots,v_t=v$ be the simple path from $s$ to $v$ where the robots at the internal nodes are with $status$  $follower$. Since $forward=0$ for the $follower$ robot at $s$, it moves through port $child$ and reached to the node $v_1$ in the first available 2-dedicated round after Phase 1 (according to step 6 of Algorithm \ref{algo:follower}). The robot at $v_1$ with $status$  $follower$ and with $forward=1$, found $increase=true$ in a 2-dedicated round. Hence, according to step 7 of Algorithm \ref{algo:follower}, moves through port child in the next 2-dedicated round and reached the node $v_2$. Continuing in this way the $follower$ robot at $v_{t-1}$ visits $v$ in a 2-dedicated round. Since the robot at $v$ is with $status$  $master$, it finds $increase=true$ in a 2-dedicated round. Hence by step 9 of Algorithm \ref{algo:master}, the $master$ robot starts searching for an empty node $w$ adjacent to $v$ in the consecutive 3-dedicated rounds. Since $v$ has an empty neighbor $w$ (connected through port $p$,say), the $master$ robot executes the subroutine \ssl $(\alpha, parent)$, where $\alpha=10 \cdot B_{p}$. The movements of the robot corresponding to the subroutine \ssl $(\alpha, parent)$ are done only in 4-dedicated rounds.
      After the subroutine \ssl $(\alpha, parent)$ is executed, the $master$ robot moves to the empty node $w$ in the next 5-dedicated round. The $follower$ at $v_{t-1}$, after finding $increase=true$ in a 4-dedicated round, learns that the robot at the node connected through its $child$ port started executing  the subroutine \ssl $(\alpha, parent)$. Hence it starts executing the subroutine \lsl. According to the subroutine \lsl, the robot
      observes the event of $increase=true$ or $increase=false$ in the consecutive 4-dedicated rounds and accordingly decodes $\alpha''=\gamma \cdot \alpha'$ by identifying $increase=true$ event as a 1 and $increase=false$ as a 0 (ref. steps 5 - 24 of Algorithm \ref{alg0:learnsignal}), where $\gamma$ is the first four bits of $\alpha''$. Once it identifies two consecutive 4-dedicated rounds of $increase=false$, it learns that the transmission of information by the robot connected through the $child$ port has ended. It then computes
    $\alpha$ from $\alpha'$ by replacing every 11 by a 1 and every 10 by a 0 from left to right taking two bits at a time.

    Since $\gamma=1011$, the $follower$ robot learns that the robot connected through the $child$ port was a $master$ robot and it found an empty neighbor. It then computes the integer $p$ whose binary representation is $\alpha$. The robot then it execute \ssl $(01 \cdot B_{child}, parent)$. After this, it moves through port $child$ and updates $child=q$ and starts executing \ssl $(\alpha, parent)$, where $\alpha=01 \cdot B_{child}$.

    Continuing in this way, once the $follower$ robot at $s$ learns $\alpha$, identify the port $p$, and then move through the port $child$ to reach $v_1$ and updates $child=p$. With this movement, Phase 2 and hence the current iteration ends. Hence, at the end of this iteration, the node $w$ becomes full and no full node becomes empty.

    \item {\bf [A $follower$ has an empty neighbor]:} Let $v$ be the current position of the  $master$ robot and let $s,v_1,\cdots,v_t=v$ be the simple path from $s$ to $v$ where the robots at the internal nodes are with $status$  $follower$. Let $c<t$ be the largest integer such that $v_c$ has an empty neighbor. Since the $master$ robot at $v_t$ has no empty neighbor, according to Algorithm \ref{algo:master}, it found $Found=false$ in step 15. Hence, it executes \ssl $(\alpha, parent)$, where $\alpha=1111$.  The $follower$ at $v_{t-1}$, after finding $increase=true$ in a 4-dedicated round, learns that the robot at the node connected through its child port started executing  the subroutine \ssl $(\alpha, parent)$. Hence it observes the event of $increase=true$ or $increase=false$ in the consecutive 4-dedicated rounds and accordingly decodes $\alpha''$ by identifying $increase=true$ event as a 1 and $increase=false$ as a 0 (ref. steps 12- 23 of Algorithm \ref{algo:follower} ). It finds $\gamma$, the first four bits of $\alpha''$ are 1111, hence learns that the robot connected through its $child$ port is a $master$ robot and does not have an empty neighbor. Hence, according to step 34 of Algorithm \ref{algo:master}, it changes its status to $master$ and starts executing Algorithm \ref{algo:master}. If $c=t-1$, then the $master$ now has an empty neighbor and one empty node becomes full by the previous case where $master$ have an empty neighbor. Otherwise, this robot again executes \ssl $(\alpha, parent)$ where $\alpha=1111$. Similar as argued above, the $follower$ at $v_{t-2}$  changes its $status$  to $master$. This continues until the $follower$ at $v_c$ changes its $status$  to $master$ and since it has an empty neighbor,
    by the previous case, exactly one empty node becomes full after the current iteration.
    \item{\bf [$s$ has an empty neighbor]:} In this case, first the $master$ robot executes \ssl $(alpha,parent)$ and then becomes $idle$. Each of the $follower$ one by one becomes $master$ and then $idle$. Eventually, the $follower$ robot at $s$ becomes $master$. Since $s$ has an empty neighbor $w$, by the case where $master$ have an empty neighbor. it moves to $w$ and Phase 2 as well as the current iteration ends. At the end of this iteration, the empty node $w$ becomes full.
\end{itemize}
\end{proof}
The following theorems shows that the algorithm terminates after the $k$-th iteration.
\begin{theorem}\label{theorem:kiteration}
Each robot becomes $idle$ by the $k$-th iteration.
\end{theorem}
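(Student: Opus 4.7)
The plan is to induct on $j$ with the invariant that at the start of iteration $j$, for $1 \le j \le k$, exactly $k-j+1$ non-$idle$ robots (all still $active$) sit at $s$, and the structural picture of Lemma~\ref{lem:full} holds (a path $P$ from $s$ to the $master$ with $follower$s at internal nodes, and all robots off $P$ are $idle$). The base case $j=1$ is immediate from the initial configuration, where all $k$ $active$ robots sit at $s$ and no $master$ or $follower$ yet exists.

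For the inductive step with $j \le k-1$, the hypothesis gives $k - j + 1 \ge 2$ robots at $s$, so Lemma~\ref{lem:election} produces a unique leader in Phase~1 (who stays at $s$ after the 6th $1$-dedicated round) and Lemma~\ref{lem:full} guarantees that exactly one empty node becomes full in Phase~2 with no full node becoming empty. I would then argue by a case inspection of Phase~2 that the elected leader is the unique robot leaving $s$: either it forwards itself along $P$ after its \lsl\ returns $\gamma \ne 1111$, or (if the entire path cascades to $idle$) it becomes $master$ and moves to an empty neighbor of $s$, whose existence in that case is guaranteed by Lemma~\ref{lem:empty}. Because the other $active$ robots at $s$ only update their $child$ variable via \actv\ without moving, the count at $s$ drops from $k - j + 1$ to $k - j$, establishing the invariant at iteration $j + 1$.

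Applying the invariant at $j = k$ leaves a single $active$ robot at $s$. In Phase~1 it observes $alone=true$ on its very first call of \pb, so $candidate$ and then $election$ get set and its status becomes $follower$ (since $k>1$). In Phase~2 it enters \follower\ with $forward = 0$ and $alone = true$, taking the termination branch that uses a $0$-dedicated round to step through $child$ and back and then sets $status = idle$. A short induction along $P$ then shows the $0$-dedicated $increase=true$ event propagates forward: each $follower$ on $P$, waiting in its $forward = 1$ branch, observes this event, takes the analogous termination branch in \follower, propagates the signal once through $child$, and becomes $idle$; finally, the $master$'s wait-step in Algorithm~\ref{algo:master} observes $increase = true$ in a $0$-dedicated round and sets $status = idle$. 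Since every robot off $P$ was already $idle$ by the invariant, all $k$ robots are $idle$ by the end of iteration $k$.

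The main obstacle is to verify that a $0$-dedicated $increase$ event truly encodes termination and cannot be confused with a routine ``move forward'' request along $P$. I would handle this by inspecting Algorithms~\ref{algo:master} and~\ref{algo:follower} and checking that all forward-movement signals along $P$ in earlier iterations are transmitted in $2$-dedicated rounds, whereas the $0$-dedicated round is entered only through the $alone = true$ branch of \follower, which by the invariant can be triggered only in iteration $k$.
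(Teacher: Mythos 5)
Your proposal is correct and takes essentially the same route as the paper: both arguments reduce, via Lemma~\ref{lem:full} and the observation that exactly one robot leaves $s$ per iteration, to the $k$-th iteration in which the lone robot at $s$ is elected as a $follower$ with $forward=0$, and then propagate the $0$-dedicated $increase=true$ signal along the path $P$, turning each $follower$ and finally the $master$ to $idle$. Your explicit induction establishing the count at $s$, and your check that $0$-dedicated termination signals cannot be confused with $2$-dedicated forwarding, merely spell out steps the paper asserts in passing.
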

\begin{proof}
It is enough to  show that if a robot is not $idle$ after the $(k-1)$-th iteration, then it becomes $idle$ in the $k$-th iteration. Suppose that $Z$ be the set of robots which are not $idle$ at the end of $(k-1)$-th iteration. By Lemma \ref{lem:full}, all the $z$ robots which are not $idle$ are present on the nodes of a simple path from $s$ to the node where the $master$ node is present. Also, for any two consecutive nodes $v$ and $w$ in this path, $v$ is connected to $w$ through port $child$. Let $s,v_1,v_2, \cdots,v_t$ be the path where the robot at $v_t$ is $master$.
Since number robot at $s$ decreases by 1 in each iteration, after the $(k-1)$-th iteration, there is exactly one robot at $s$. Therefore, according to step 2 of Algorithm \ref{algo:follower}, the robot at $s$ moves through the port $child$ in a 0-dedicated round and then becomes $idle$ after coming back to $s$. Since $v_1$ is reachable from $s$ through port $child$, hence the robot at $v_1$ finds $increase=true$ in a 0-dedicated round. Hence, according to step 11 of Algorithm \ref{algo:follower}, this robot again moves through port $child$, reaches $v_2$ and then becomes $idle$ after coming back to $v_1$. Continuing in this way, the $master$ robot at $s$ observe $increase=true$ in a 0-dedicated round and hence it changes its $status$  to $idle$ according to step 5 of Algorithm \ref{algo:master}. Therefore, all the robots in the simple path from $s$ to the node where the $master$ robot is present become $idle$ in the $k$-th iteration.
\end{proof}

Following lemmas and theorem gives the time and memory complexity of the proposed algorithm.

\begin{lemma}\label{lem:1round}
The algorithm executes at most $k\log L$ 1-dedicated rounds.
\end{lemma}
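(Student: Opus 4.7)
The plan is to combine three ingredients already established earlier in the paper: (i) an upper bound on the number of iterations of the algorithm, (ii) an upper bound on how many invocations of \pb~Phase~1 of a single iteration can contain, and (iii) the fact that each invocation of \pb~is confined to a bounded number of $1$-dedicated rounds.

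First, by Theorem~\ref{theorem:kiteration}, every robot becomes $idle$ by the $k$-th iteration, so the algorithm runs for at most $k$ iterations in total. Because $1$-dedicated rounds are used exclusively in Phase~1 (every other dedicated slot is reserved for a step of Phase~2), the total count of $1$-dedicated rounds is the sum, over these at most $k$ iterations, of the number of $1$-dedicated rounds consumed by Phase~1 of that iteration.

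Next, fix an iteration and consider its Phase~1. The subroutine \pb$(M,j)$ is called for $j=1,2,\ldots$ until the variable $election$ becomes $true$, and by inspection of Algorithm~\ref{algo:processbit} every such call uses exactly six consecutive $1$-dedicated rounds. The argument given in the proof of Lemma~\ref{lem:election} shows that after at most $z$ invocations of \pb, where $z$ is the length of the reverse label of the largest id among the robots currently at $s$, exactly one robot is left at $s$ and is elected as leader; the bound $z$ follows because two distinct labels must differ in some bit within their common length, and the labels cannot consist of trailing zeroes in the reverse representation. Since every id lies in $[0,L]$, we have $z \le \lceil \log L \rceil$, and therefore Phase~1 of any iteration contributes at most $6\lceil \log L\rceil$ rounds to the $1$-dedicated count.

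Multiplying, the total number of $1$-dedicated rounds across the entire execution is at most $6k\lceil \log L\rceil = O(k\log L)$, which is the claimed bound. The only real obstacle is cleanly justifying $z \le \log L$ in the presence of labels of different lengths, but this is already handled by the right-to-left bit processing argument used in Lemma~\ref{lem:election}; after that observation the proof is just arithmetic on the two nested counts.
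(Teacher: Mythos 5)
Your proof is correct and follows essentially the same route as the paper's: six $1$-dedicated rounds per call of \pb, $O(\log L)$ calls per Phase~1 via the right-to-left bit argument of Lemma~\ref{lem:election}, and at most $k$ iterations overall. The only differences are cosmetic --- you cite Theorem~\ref{theorem:kiteration} for the iteration bound where the paper cites Lemma~\ref{lem:full}, and the paper allows $z+1$ calls of \pb~rather than your $z$ (the elected robot needs one further call to detect it is alone), an off-by-one that does not affect the $O(k\log L)$ bound.
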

\begin{proof}
According to our algorithm, 1-dedicated rounds are used to executes the task corresponding to Phase 1 of an iteration. In Phase 1 of any iteration, all the robots in $s$ calls the subroutine \pb~for the integers $j=1,2, \cdots$ until $election=true$. Each call of the subroutine \pb~uses six 1-dedicated rounds.
Now, as described in the proof of Lemma \ref{lem:election}, exactly one robot is elected among the robots in $s$ using at most $z+1$ calls of subroutine \pb, where $z$ is the length of the label of the robot with maximum id which is present at $s$ at the beginning of the iteration. Since the ids of the robots in $s$ are in the range $[0,L]$, therefore $z \in O(\log L)$. This proves that Phase 1 of each iteration uses $O( \log L)$ of 1-dedicated rounds. Since exactly one empty node becomes full in each iteration (Lemma \ref{lem:full}), there are at most $k$ iterations in the Algorithm \ref{alg0:dispersion}. Therefore, the total number of of 1-dedicated rounds used in the algorithm is $O(k\log L)$.
\end{proof}

\begin{lemma}\label{lem:2round}
The total number of 2-dedicated round used is $O(k^2)$.

\end{lemma}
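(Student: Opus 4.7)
My plan is to locate where 2-dedicated rounds are actually consumed by the algorithm and then bound their number per iteration by the length of the path of non-idle robots. Inspecting Algorithm~\ref{algo:follower} and Algorithm~\ref{algo:master}, the only place a robot moves in a 2-dedicated round is the ``wake-up'' signal that is passed from the \follower{} at $s$ along the path $s, v_1, \ldots, v_t$ (with the \master{} sitting at $v_t$) at the very start of Phase~2. No 2-dedicated movement occurs during the search for an empty neighbor (which uses 3-dedicated rounds), during information exchange (4-dedicated rounds), during the actual occupation move (5-dedicated rounds), or during the final termination cascade (0-dedicated rounds). So I only have to account for this one cascade.

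Next I would analyze a single cascade. The \follower{} at $s$ with $forward=0$ moves through its $child$-port in the first available 2-dedicated round and returns in the next round. The \follower{} at $v_1$ (with $forward=1$) sees $increase=true$ in that 2-dedicated round and, in the \emph{next} 2-dedicated round, itself moves through its $child$-port to $v_2$ and returns. Continuing inductively, the signal advances one hop per 2-dedicated round until the \master{} at $v_t$ registers $increase=true$ in a 2-dedicated round and switches to its search phase. Hence Phase~2 of a single iteration consumes at most $t$ of the 2-dedicated rounds, where $t$ is the length of the path guaranteed by Lemma~\ref{lem:full}.

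Finally I would sum over iterations. By Lemma~\ref{lem:full}, exactly one new node becomes full per iteration and no full node becomes empty, so at the start of iteration $j$ the number of non-idle robots (equivalently, the length of the path from $s$ to the \master) is at most $j$. Combined with Theorem~\ref{theorem:kiteration}, which gives that the algorithm runs for at most $k$ iterations, the total number of 2-dedicated rounds is bounded by $\sum_{j=1}^{k} O(j) = O(k^2)$.

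The only subtlety I expect is verifying that no other step of the algorithm sneaks in an extra 2-dedicated movement; in particular one must check that when a \master{} fails and a parent \follower{} promotes itself to \master{} (with $recent=true$), it skips directly to the search loop in Algorithm~\ref{algo:master} without consuming another 2-dedicated round. Once this bookkeeping is confirmed, the $O(j)$ per-iteration bound and the $O(k^2)$ total follow immediately.
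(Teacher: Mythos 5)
Your proposal is correct and follows essentially the same argument as the paper: identify the wake-up cascade along the path $s, v_1, \ldots, v_t$ from Lemma~\ref{lem:full} as the sole consumer of 2-dedicated rounds, bound the per-iteration cost by the path length (at most $O(j)$ in iteration $j$, since exactly one robot leaves $s$ per iteration), and sum over the at most $k$ iterations to get $O(k^2)$. Your extra bookkeeping check that a freshly promoted \master{} with $recent=true$ skips directly to the search loop is a sound precaution that the paper leaves implicit.
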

\begin{proof}
We first show that, for $1 \le j \le k-1$, the $j$-th iteration uses $O(j)$ 2-dedicated rounds.
In any iteration, the 2-dedicated rounds are used by the $follower$ robots to indicate the fact that Phase 1 of the current iteration ended. Let $P=s,v_1,v_2, \cdots,v_t$ be the simple  path (according to Lemma \ref{lem:full}) where the robot at $v_t$ is $master$ and in each of the internal node of the path, $follower$ robots are present.
The $follower$ robot that is elected in Phase 1 moves through port $child$ in the next available 2-dedicated round from $s$ and then comes back to $s$ (according to step 6 of Algorithm \ref{algo:follower}). Any $follower$ robot, after observing $increase=true$ in a 2-dedicated round, moves through the port $child$ and comes back to its current node. Hence, the total number of 2-dedicated round used is equal to the length of the path $P$. Since exactly one robot leaves $s$ in every iteration, there can be at most $j-2$ $follower$ robot and one $master$ robot. Hence, the length of the path $P$ can be at most $j-1$.

Since there are $k$ iterations of the algorithm, the total number of 2-dedicated rounds used is $O(k^2)$.
\end{proof}

\begin{lemma}\label{lem:4round}
The total number of 4-dedicated round is $O(k^2 \log \Delta)$.
\end{lemma}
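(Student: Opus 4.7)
The plan is to observe that 4-dedicated rounds are used \emph{only} inside the communication primitives \ssl\ and \lsl, and then to bound, iteration by iteration, the total number of bits transmitted across all such calls.

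First I would identify exactly which messages are sent in a single iteration. Using Lemma \ref{lem:full}, at the start of Phase 2 the non-idle robots lie on a simple path $s,v_1,\dots,v_t=v$ with $t\le k-1$, the robot at $v$ being the $master$. Three kinds of messages $\alpha$ can be passed to \ssl:
\begin{enumerate}
    \item $\alpha=1111$, sent by a $master$ that fails to find an empty neighbour: length $4$;
    \item $\alpha=1011\cdot B_{prt}$, sent by the (unique) $master$ that finds an empty neighbour through port $prt$: length $4+|B_{prt}|=O(\log\Delta)$, since $prt\le\Delta-1$ and the transformed binary encoding doubles the length;
    \item $\alpha=1110\cdot B_{child}$, forwarded by each $follower$ on the path back towards $s$: length $O(\log\Delta)$.
\end{enumerate}
Inspecting \ssl, transmitting a string of length $z$ consumes exactly $z$ 4-dedicated rounds (one per bit, whether the bit is a $1$ causing a move or a $0$ causing a wait), and \lsl\ consumes the same number of 4-dedicated rounds on the receiver's side (they are the same 4-dedicated rounds, since sender and receiver are synchronised by the round structure).

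Next I would count the number of such transmissions in a single iteration. By the case analysis in Lemma \ref{lem:full}, the $master$ can fail and propagate the $1111$ signal up the path at most $t\le k-1$ times (turning successive $follower$s into the new $master$), then exactly one $master$ succeeds and sends a $1011\cdot B_{prt}$ message, and finally each $follower$ on the path forwards a $1110\cdot B_{child}$ message to its parent, which happens at most $k-1$ times. Thus the total number of 4-dedicated rounds used in one iteration is at most
\[
(k-1)\cdot O(1) \;+\; O(\log\Delta) \;+\; (k-1)\cdot O(\log\Delta) \;=\; O(k\log\Delta).
\]

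Finally, by Theorem \ref{theorem:kiteration} the algorithm executes at most $k$ iterations, so summing the per-iteration bound gives $O(k^2\log\Delta)$ 4-dedicated rounds in total. The only mildly delicate point, and the one I would write out carefully, is justifying that each use of \ssl\ really charges only $|\alpha|$ to the 4-dedicated round count (intervening non-4-dedicated rounds are accounted for under their own step's budget), and that the ``failed $master$'' chain in a single iteration contributes at most $O(k)$ rather than $O(k\log\Delta)$ because those messages have constant length; everything else is a straightforward accounting argument on top of Lemmas \ref{lem:empty} and \ref{lem:full}.
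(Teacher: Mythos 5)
Your proposal is correct and follows essentially the same route as the paper: bound each \ssl\ transmission by $O(\log \Delta)$ 4-dedicated rounds (since $\alpha$ is a constant-length header plus the transformed binary encoding of a port number), observe that at most $O(k)$ such transmissions occur per iteration along the path of $master$/$follower$ robots from Lemma \ref{lem:full}, and multiply by the $O(k)$ iterations. Your extra refinement --- separating the constant-length $1111$ messages from the $O(\log\Delta)$-length port-carrying messages --- is slightly sharper bookkeeping than the paper's uniform $O(\log\Delta)$ bound per robot, but it changes nothing in the final $O(k^2\log\Delta)$ count.
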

\begin{proof}
We first show that, for $1 \le j \le k-1$, the $j$-th iteration uses $O(j \log \Delta)$ 4-dedicated rounds.

In any iteration, the 4-dedicated rounds are used to execute the subroutine \ssl~by a $master$ or $follower$ robot. According to
subroutine \ssl, (Algorithm \ref{alg:sendsignal}), the robot moves to its parent node in a 4-dedicated round only if the corresponding bit of $\alpha$ is 1. Note that the first two bits of $\alpha$ are one of  11, 10 and  01 and the next bits represents transformed binary encoding of the port $child$ of the current node. Hence the length of $\alpha$ is $O(\log \Delta)$. Since the movement of the robot depends only on the length of $\alpha$, the robot uses $O(\log \Delta)$ 4-dedicated rounds. Also, since the number of $master$ or $follower$ robot in the $j$-th iteration is at most $j-1$, hence the total number of 4-dedicated rounds in an iteration is $O(j \log \Delta)$.

Since there are $k$ iterations, the total number of 4-dedicated rounds is $O(k^2 \log \Delta)$.
\end{proof}

\begin{lemma}\label{lem:5round}
The total number of 5-dedicated round is $O(k^2)$.

\end{lemma}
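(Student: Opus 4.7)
The plan is to mirror the template used in Lemmas \ref{lem:2round} and \ref{lem:4round}: first bound the number of 5-dedicated rounds used in a single iteration $j$ by $O(j)$, then sum over the at most $k$ iterations guaranteed by Lemma \ref{lem:full} and Theorem \ref{theorem:kiteration} to obtain the overall $O(k^2)$ bound.

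First I would inspect Algorithms \ref{algo:master} and \ref{algo:follower} to pinpoint every place where a 5-dedicated round is consumed. By design, 5-dedicated rounds serve as the slot reserved for the forward movement step of Phase 2: the $master$ (or a $follower$ that was promoted to $master$ in the current iteration) moves through the freshly-discovered empty-neighbor port, and every $follower$ on the active path hops forward through its $child$ port to take over its successor's node. In contrast to the 4-dedicated rounds used by \ssl{} and \lsl{} (which involve a move and a return in two consecutive rounds), each forward hop in a 5-dedicated round is one-way and consumes exactly one round per robot.

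Next, I would invoke Lemma \ref{lem:full}: at the start of iteration $j$, the non-idle robots that are not at $s$ lie on a simple path $P = s, v_1, \ldots, v_t$ of length at most $j-1$, with the $master$ at $v_t$ and $follower$s at the internal nodes. Only the robots on $P$ (together with the newly elected $follower$ that was just added at $s$ in Phase~1) can possibly move in a 5-dedicated round during Phase~2 of that iteration. Even if a cascade of $master$ robots gives up and turns $idle$ because they fail to find an empty neighbor, those status transitions themselves consume no 5-dedicated round: the failing $master$ only executes \ssl{} (which lives in 4-dedicated rounds) and then sets $status=idle$ without moving. Only the single robot that eventually locates an empty neighbor performs a forward hop in a 5-dedicated round, followed by each $follower$ on the prefix of $P$ performing exactly one forward hop. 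Hence the $j$-th iteration uses at most $O(j)$ 5-dedicated rounds.

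The only subtlety—not really an obstacle—is confirming that a $follower$ promoted to $master$ via the $1111$ signal does not double-count 5-dedicated rounds; by inspection of Algorithm \ref{algo:master}, such a promoted $master$ either locates an empty neighbor and performs exactly one forward hop, or in turn sends $1111$ back and becomes $idle$ without any 5-dedicated-round movement. Once this is observed, summing $\sum_{j=1}^{k} O(j) = O(k^2)$ closes the argument.
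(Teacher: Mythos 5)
Your proof is correct and takes essentially the same route as the paper's: each $master$ or $follower$ robot uses at most one 5-dedicated round per iteration (by inspection of Algorithms \ref{algo:master} and \ref{algo:follower}), there are at most $j-1$ such robots in iteration $j$, and summing $O(j)$ over the at most $k$ iterations yields $O(k^2)$. Your additional observations---that a failing $master$ becomes $idle$ using only 4-dedicated rounds, and that a promoted $master$ does not double-count---are accurate refinements of details the paper leaves implicit.
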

\begin{proof}
We first show that, for $1 \le j \le k-1$, the $j$-th iteration uses $O(j)$ 5-dedicated rounds. According to Algorithm \ref{algo:master} and Algorithm \ref{algo:follower}, the $master$ and each $follower$ robot uses at most one 5-dedicated round. Since the number of $master$ or $follower$ robot in the $j$-th iteration is at most $j-1$, hence the total number of 5-dedicated rounds in an iteration is $O(j)$. Since there are $k$ iterations, the total number of 5-dedicated rounds is $O(k^2)$.
\end{proof}

\begin{lemma}\label{lem:3round}
The total number of 3-dedicated round  used across all the iterations is $O(\min\{k\Delta,k^2\})$.
\end{lemma}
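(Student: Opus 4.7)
The plan is to show that each 3-dedicated round corresponds to exactly one ``probe attempt'' by a master robot along a port incident to its current node, and then to bound the total number of such probes over the whole execution. Inspecting Algorithm~\ref{algo:master}, the only step that initiates a move in a 3-dedicated round is the line ``Move through port $prt$ in the next available 3-dedicated round'' inside the neighbor-search loop; the immediate return move uses the very next round, which is not 3-dedicated. Hence the count of 3-dedicated rounds equals the count of probe attempts.

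The core ingredient will be a non-repetition invariant for any node $v$ that ever gets settled: across the whole execution, the total number of probes launched by a master robot standing at $v$ is at most $\deg(v)$. The $prt$ counter used at $v$ is monotone non-decreasing across the successive times $v$ hosts a master. When a master arrives freshly at $v$ via the $recent=false$ branch (after moving to $v$ in the previous iteration) its search begins at $prt=0$; when a follower at $v$ is later promoted to master via the $recent=true$ branch (triggered by a downstream-chain collapse), the search resumes at $prt=child+1$, where the stored value of $child$ is exactly the port through which the node downstream of $v$ was settled. Thus all ports strictly below $child$ were already probed and found occupied in prior searches at $v$, while port $child$ still points to an occupied node, so each successive search at $v$ strictly extends the previously probed prefix, and the number of distinct values of $prt$ at $v$ is at most $\deg(v)$.

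From the invariant the two bounds are immediate. By Lemma~\ref{lem:full} at most $k$ nodes are ever settled, and each contributes at most $\Delta$ probes, giving the $O(k\Delta)$ bound. For the $O(k^2)$ bound, I classify each probe as either a \emph{hit} (it finds an empty neighbor and triggers a settlement) or a \emph{miss}. Lemma~\ref{lem:full} gives at most one new settlement per iteration and Theorem~\ref{theorem:kiteration} bounds the number of iterations by $k$, so there are at most $k$ hits. Each miss is a directed adjacency $u \to v$ where $u$ is the current master's position (settled) and $v$ is an already-settled neighbor; the invariant forbids two misses from $u$ through the same port, so the number of misses is at most the number of ordered pairs of adjacent settled nodes, which is at most $k(k-1)=O(k^2)$. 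Taking the minimum of the two bounds yields the claim.

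The main obstacle will be formalizing the non-repetition invariant cleanly. The role ``master at $v$'' can arise in several ways (the initial leader that first settles $v$, a follower at $v$ repromoted after a downstream collapse, and possibly several consecutive such repromotions within a single iteration), and one has to verify, using both Algorithm~\ref{algo:master} and Algorithm~\ref{algo:follower}, that the $child$ variable stored at $v$ is always updated so as to equal the largest port ever probed from $v$, so that restarting the search at $prt=child+1$ genuinely avoids repeating an earlier probe.
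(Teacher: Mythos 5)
Your proof is correct and follows essentially the same route as the paper: both arguments equate 3-dedicated rounds with master probes, bound them by $O(k\Delta)$ via at most $k$ search nodes and the no-port-repeated invariant, and get $O(k^2)$ by noting that probes from a node land on distinct (hence at most $k$ full) neighbors. Your hit/miss decomposition and the explicit justification of the monotone $prt$/$child$ invariant merely spell out details the paper asserts in a single sentence (``a master robot searches each port of a node at most once''), rather than constituting a different approach.
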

\begin{proof}
The 3-dedicated rounds are used only by a $master$ $robot$ to search for an empty neighbor.
Consider any node $v$ in $G$ which is full at the end of the $k$-th iteration. Since a $master$ robot searches each port of a node at most once, and there is at most $k$ nodes from where the $master$ robot searches for empty neighbors, therefore, the total number of 3-dedicated rounds used is $O(k \Delta)$.

Now consider the executions of 3-dedicated rounds by a $master$ robot. As the total number of robots is $k$, a $master$ robot can not find more than $k$ full neighbor from any node. Also, total number of nodes from where the $master$ robot searches for empty neighbors is at most $k$. Hence, the total number of 3-dedicated rounds used is $O(k^2)$.

Hence, the total number of 3-dedicated rounds is $O(\min\{k\Delta,k^2\})$.
\end{proof}

\begin{lemma}
\label{lem:0round}
The total number of $0$-dedicated round is at most $k$.
\end{lemma}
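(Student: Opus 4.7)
The plan is to argue that 0-dedicated rounds are used only in the final (termination) iteration and then count them carefully through the pipelined wave of moves along the path from $s$ to the $master$.

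First, I would inspect the three possible sources of 0-dedicated usage in the algorithms. In \master, a $master$ robot touches a 0-dedicated round only by waiting for $increase=true$ in one and then setting $status=idle$ without moving. In \follower with $forward=0$, the root follower takes the 0-dedicated branch only when $alone=true$ at $s$. In \follower with $forward=1$, a non-root follower uses 0-dedicated rounds only when $increase=true$ fires in a 0-dedicated round. By Lemma \ref{lem:full}, in every non-final iteration at least two robots sit at $s$, so $alone=false$ at $s$; this forces the root follower to propagate termination/advance information through 2-dedicated rounds, and consequently every non-root follower observes $increase=true$ in a 2-dedicated round and takes the 2-dedicated branch as well. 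Thus, outside the final iteration no 0-dedicated round is used. By Theorem \ref{theorem:kiteration}, the final iteration is at most the $k$-th, so it suffices to count the 0-dedicated rounds used in this single iteration.

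Next, by Lemma \ref{lem:full}, at the start of the termination iteration the non-idle robots sit on a simple path $s = v_0, v_1, \ldots, v_t$ with a $follower$ at each of $v_0, \ldots, v_{t-1}$ and the $master$ at $v_t$, and each edge $(v_i, v_{i+1})$ corresponds to the $child$ port of the robot at $v_i$. I would then trace the pipelined wave of moves. If $R_0$ denotes the first 0-dedicated round of this iteration, the root follower makes its forward move in $R_0$, and a short induction on $i$ shows that the follower at $v_i$ for $1 \le i \le t-1$ makes its forward move through $child$ in round $R_0 + 6i$ and its back move in round $R_0 + 6(i+1)$, while the master at $v_t$ observes $increase=true$ in round $R_0 + 6(t-1)$ and becomes $idle$ without moving.

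The final bookkeeping step is to exploit the overlap: for $1 \le i \le t-2$, the back move of the follower at $v_i$ falls in the same round $R_0 + 6(i+1)$ as the forward move of the follower at $v_{i+1}$. Thus the distinct 0-dedicated rounds used are the $t$ forward rounds $R_0, R_0 + 6, \ldots, R_0 + 6(t-1)$ together with the one extra back round $R_0 + 6t$ of the deepest follower, giving $t+1$ rounds in total. Since the path carries $t+1$ non-idle robots and at most $k$ robots exist, $t+1 \le k$, and the bound follows. The delicate spot I expect is confirming that the root follower's return move (scheduled in the round immediately after $R_0$, which is a 1-dedicated round) contributes nothing to the count, and that no stray $active$ or $idle$ robot triggers a 0-dedicated action outside this wave.
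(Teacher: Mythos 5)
Your proposal is correct and takes essentially the same route as the paper's proof: 0-dedicated rounds are used only in the last iteration, and the non-$idle$ robots along the path from $s$ to the $master$ each account for at most one distinct 0-dedicated round, giving the bound $k$. Your pipelined accounting is in fact slightly more careful than the paper's one-line claim that ``each robot uses at most one 0-dedicated round'' --- a non-root $follower$ actually moves in two such rounds (forward and back), but as you note the back move of the robot at $v_i$ shares its round with the forward move of the robot at $v_{i+1}$, so the number of distinct 0-dedicated rounds is $t+1 \le k$ exactly as you compute.
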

\begin{proof}
The 0-dedicated rounds are used only in the last iteration. According to Algorithm \ref{algo:master} (steps 5-6) and Algorithm \ref{algo:follower} (steps 10-12), each robot uses at most one 0-dedicated round and then becomes $idle$. Since the total number of $master$ or $follower$ robot in the $k$-th iteration is at most $k$, the  total number of $0$-dedicated round is at most $k$.
\end{proof}
\begin{theorem}
The algorithm terminates in time $O(k \log L+ k^2 \log \Delta)$ and each robot uses $O(\log L+ \log \Delta)$ additional memory.
\end{theorem}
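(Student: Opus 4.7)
The plan is to derive the time bound by simply summing the six round-type lemmas already proved, and to derive the memory bound by auditing every variable each robot ever writes.

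For the running time, I would first observe that by construction every round of the execution is labelled $j$-dedicated for exactly one $j \in \{0,1,2,3,4,5\}$, and the six types appear cyclically in blocks of six. Hence the total number of rounds executed is at most six times the sum of the per-type counts, i.e.\ bounded (up to a constant) by
\[
T_0 + T_1 + T_2 + T_3 + T_4 + T_5,
\]
where $T_j$ is the quantity estimated in the corresponding lemma. Plugging in Lemmas \ref{lem:0round}, \ref{lem:1round}, \ref{lem:2round}, \ref{lem:3round}, \ref{lem:4round}, \ref{lem:5round} gives
\[
O(k) + O(k\log L) + O(k^2) + O(\min\{k\Delta,k^2\}) + O(k^2\log\Delta) + O(k^2),
\]
which collapses to $O(k\log L + k^2\log\Delta)$. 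Theorem \ref{theorem:kiteration} (together with Lemma \ref{lem:full}) guarantees that the algorithm has terminated after the $k$-th iteration, so these six counts indeed cover the entire execution and the time bound follows.

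For the memory bound, I would go through every variable a robot keeps between rounds and classify it into one of two buckets. The $O(\log L)$ bucket contains the robot's own identifier and its reverse label $l(M)$, together with the bit index $j$ used in Phase 1 (which ranges up to $|l(M)| = O(\log L)$ and therefore needs only $O(\log\log L)$ bits, absorbed by $\log L$). The $O(\log \Delta)$ bucket contains the port-valued variables $parent$, $child$, $q$, $q'$, $p$, $prt$, the signal buffer $\alpha$ together with its companion $\alpha'$ used in \lsl, and a small counter keeping track of $|\alpha|$. Here I would invoke the explicit message formats $1011 \cdot B_{prt}$, $1111$, and $1110 \cdot B_{child}$: the transformed binary encoding $B_p$ has length $2\lceil \log \Delta\rceil$, so $\alpha$ and $\alpha'$ are each $O(\log \Delta)$ bits. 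All remaining flags ($status$, $engage$, $move$, $candidate$, $election$, $recent$, $forward$, $Found$, $signal$, $b_1$, $b_2$, and the model-given $alone$, $increase$, $decrease$) are $O(1)$ bits each. Summing gives $O(\log L + \log\Delta)$ additional memory per robot.

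The only non-routine point I expect to need care with is verifying that no intermediate quantity silently grows beyond these two scales — in particular that the decoding buffer $\alpha'$ inside \lsl~never accumulates more than $O(\log\Delta)$ bits before the terminating \textit{two-consecutive-zeros} pattern is seen, and that the bit counter $j$ in \textsc{Phase\_1} can be reused across iterations rather than being recorded permanently. Both follow from the fact that every instance of \ssl~is invoked with a message of the fixed form above and that Phase 1 resets $j$ at its start. Once these checks are made the two bounds combine into the claimed statement.
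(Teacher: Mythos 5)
Your proposal is correct and takes essentially the same route as the paper: the time bound comes from summing the six per-type dedicated-round lemmas together with termination after the $k$-th iteration (the paper phrases this via $\max\{f,g\}\in O(f+g)$, which yields the same sum and the same absorption of $\min\{k\Delta,k^2\}$ into $O(k^2\log\Delta)$), and the memory bound from classifying the persistent variables into an $O(\log L)$ bucket and an $O(\log\Delta)$ bucket with all remaining flags of constant size. Your extra checks---that the bit index $j$ needs only $O(\log\log L)$ bits and that the buffer $\alpha'$ cannot grow beyond $O(\log\Delta)$ because every transmitted message has the fixed form $1111$, $1011\cdot B_{p}$, or $1110\cdot B_{child}$---only make explicit minor points the paper leaves implicit.
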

\begin{proof}
The time complexity of the algorithm is asymptotically bounded above by the maximum number of required $j$-dedicated rounds, where $0 \le j \le 5$. Since for any two functions $f,g$, $\max\{f,g\} \in O(f+g)$, the time complexity of the algorithm is obtained from Lemma \ref{lem:1round} to Lemma \ref{lem:0round} and Theorem \ref{theorem:kiteration}, is equals to  $O(k \log L+ k^2 \log \Delta+ (min\{k\Delta,k^2\}))$. Since $min\{k\Delta,k^2\}) \in O(k^2 \log \Delta)$, hence the time complexity of our algorithm is
$O(k \log L+ k^2 \log \Delta)$.

The variable $j$ which is used as a parameter in the subroutine \pb~uses $O(\log L)$ memory. The variables $\alpha',\alpha$ (used in the subroutine \lsl), $child$, $parent$, $q,q'$ (used to store the incoming and outgoing  ports in the subroutine \pb), $prt$ (used to store the port number while searching for empty neighbor in the subroutine \master $(M)$) uses memory of size $O(\log \Delta)$. All the other variables uses by the robots are of constant memory. Hence, the additional memory used by each robot is $O(\log L+\log \Delta)$.
\end{proof}

We shows that the amount of additional memory used by every robot in the algorithm is indeed asymptotically optimal. In \cite{MollaM19}, the authors proved $\Omega(\log \Delta)$ lower bound of memory requirement by any randomized algorithm for each robot for dispersion.  The same proof gives $\Omega(\log \Delta)$ lower bound of memory for any deterministic algorithm.
Hence it is enough to prove the lower bound $\Omega(\log L)$ which we do in the following theorem.

\begin{theorem}
In the proposed communication model, dispersion can not be achieved by a set of mobile robots if the memory available to the robots is $o(\log L)$.
\end{theorem}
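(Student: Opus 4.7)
The plan is to establish the lower bound by a pigeonhole argument on a minimal two-node instance. The guiding intuition is that in a deterministic silent-robot model, two robots whose memory states coincide and whose observations are identical in every round must perform identical actions in every round; hence successful dispersion requires that distinct ids be encoded into pairwise distinct memory states, and $o(\log L)$ bits cannot afford that.

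First I would fix the smallest possible instance: let $G$ be the graph on two nodes $s, s'$ joined by a single edge (so each node has degree $1$ with port $0$) and place $k = 2$ robots at $s$. On this instance a robot has only the binary movement choice ``stay'' or ``traverse port $0$,'' and so long as the two robots remain co-located the observation $(alone, increase, decrease)$ takes the constant value $(\text{false}, \text{false}, \text{false})$ at both robots. Next I would assume for contradiction that an algorithm $\cA$ achieves dispersion with $m(L) = o(\log L)$ bits of memory per robot, and model each robot's persistent state as an element of $\{0,1\}^{m(L)}$. Since the only persistent id-dependent storage a robot has is this memory, its initial contents form the output of some function $\iota : [0, L] \to \{0,1\}^{m(L)}$; choosing $L$ large enough that $2^{m(L)} < L+1$, the pigeonhole principle produces two distinct ids $a, b \in [0, L]$ with $\iota(a) = \iota(b)$.

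I would then conclude by induction on the round number $t$ that the two robots $M_a, M_b$ share the same memory state and the same observation at the start of every round $t \ge 1$: the base case is immediate from $\iota(a) = \iota(b)$ and both robots being at $s$ with no prior activity, while the inductive step is that equal memory and equal observation force $\cA$'s deterministic transition to produce the same action, keeping the robots co-located and with equal new memory contents and equal observation variables at round $t+1$. Hence $M_a$ and $M_b$ never occupy distinct nodes, contradicting correctness of $\cA$. Combined with the $\Omega(\log \Delta)$ bound inherited from \cite{MollaM19}, this yields the claimed $\Omega(\log L + \log \Delta)$ total memory lower bound.

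The main obstacle is the modelling step that treats ``knowing one's id'' as an encoding $\iota$ into the $m(L)$-bit memory, so that two ids can truly collide when $2^{m(L)} < L+1$. This is the standard convention in the silent-robot literature: because there is no separate id register postulated in the model, whatever id-dependent behavior the algorithm exhibits must be driven by those $m(L)$ bits, so two ids mapping to the same initial memory are indistinguishable to $\cA$ throughout the execution. Once this convention is fixed, the two-node instance described above completes the contradiction and gives the theorem.
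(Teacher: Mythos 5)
Your proposal is correct and takes essentially the same route as the paper: both fix the two-node graph, apply pigeonhole to the fewer-than-$L+1$ possible initial memory states of $o(\log L)$-bit robots to obtain two distinct ids with identical states, and then run a round-by-round indistinguishability induction showing the two co-located robots always observe the same events, make the same deterministic decisions, and hence never separate, contradicting dispersion. The only cosmetic differences are that the paper instantiates the memory bound concretely as $\frac{\log L}{2}$ bits and phrases the setup as a family of $\frac{L(L-1)}{2}$ inputs $\langle G,i,j\rangle$, whereas you state explicitly the (shared, implicit in the paper) modelling convention that all id-dependent behavior must be mediated by the bounded memory.
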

\begin{proof}
Let $\cA$ be an algorithm using which a set of $k$ robots starting from a node of any graph $G$ can achieve dispersion and terminate in finite rounds, where the robots uses at most  $\frac{\log L}{2}$ bits of memory.

Define {\it state} of a robot in a round $t$, as the snapshot of the specified memory of the robot at the beginning of the round $t$. For example, if a robot has 5 bits of memory and 11100 is stored in the memory, then the $state=11100$. In any round, the decision of a robot at a node $v$ depends only on its $state$ in that round and the occurrences of the events at $v$. At the beginning of round 1, there are no events occurred in $v$ and the only information available to a robot is its id stored in its fixed memory.

We construct $\frac{L(L-1)}{2}$ different inputs as follows. For $0 \le i,j \le L-1$, the input  $<G,i,j>$, where the graph $G$ is a graph with two nodes connected by an edge, and two robots with ids $i,j$ are placed one one node of $G$.

Since at the beginning of round 1, the state of the robot depends only on its id, the robot with id $i$ have the same state in all the inputs $<G,i,j>$, $0 \le j\le L-1$, and $i \ne j$.

Now, since the size of the memory of each robot is at most $\frac{\log L}{2}$, there are at most $2^{\frac{\log L}{2}+1}< L$ different states at the beginning of round 1. Since there are $L$ robots among all the inputs are present with different ids, by Pigeonhole principle, there must exist two robots with different ids that have the same states. Let these ids are $i',j'$.

Consider the execution of the algorithm $\cA$ for the input $<G,i',j'>$. Here, the two robots have the same state at the beginning of round 1.
We claim that in any subsequent round, they have the same states and they remain co-located.
Since they experiences same events, if any, happening in this node in round 1, their decisions are going to be the same. Accordingly, either both of them move or both of them stay in the current node. Let they remain together till the beginning of the  $t$-th round and have the same states.  As in the $t$-th round, they observe the same event, so either both of them move or both of stay in the current node. So in beginning of the $t+1$-th round, they remain together. As a result, they never get separated. Which contradicts the fact the $\cA$ achieves dispersion on any graph with at most $\frac{\log L}{2}$ bits of memory.
\end{proof}

\section{Conclusion}
This paper introduces an algorithm that achieved dispersion without any communication between the robots using asymptotically optimal additional memory. Here, the task of dispersion is achieved under the assumption that the robots have access to two local information at any node: (1) whether the robot is alone at the node (2) whether the number of robots changes at the node compared to the previous round. A natural question arises that whether dispersion can be achieved with lesser local information as well. To be specific, it will be quite interesting to study whether the information of a robot is alone or not at a node is sufficient to achieve dispersion. Also, improving the time complexity of our algorithm in the proposed model or proving a lower bound of the same is another problem which can be explored in the future.
\bibliographystyle{IEEEtran}
\bibliography{biblio}
\end{document}